\renewcommand{\cite}[1]{\parencite{#1}}
\pgfplotsset{compat=1.12}
\newcommand{\paragraphNoSkip}[1]{\paragraph{#1.}}
\newtheorem{definition}{Definition}
\newtheorem{example}{Example}
\crefname{definition}{Definition}{Definitions}
\crefname{example}{Example}{Examples}
\newacronym{iff}{iff}{if and only if}
\newacronym{iid}{iid}{independent and identically distributed}
\newacronym{wlog}{wlog}{without loss of generality}
\newacronym{DeFi}{DeFi}{decentralized finance}
\newacronym{GE}{GE}{generalized entropy}
\newacronym{GC}{GC}{Gini coefficient}
\newacronym{CV}{CV}{coefficient of variation}
\newacronym{TL}{TL}{Theil L}
\newacronym{TT}{TT}{Theil T}
\newacronym{HH}{HH}{Herfindahl-Hirschman}
\newacronym{AM}{AM}{Atkinson measure}
\newacronym{BBG}{BBG}{bigger is better Gini coefficient}
\newcommand{\define}{\stackrel{\mathclap{\mbox{\text{\scriptsize def}}}}{=}}
\newcommand{\wealthn}{{\gls[hyper=false]{wealthn}}}
\newcommand{\wealths}{{\gls[hyper=false]{wealths}}}
\newcommand{\wealth}{{\gls[hyper=false]{wealth}}}
\newcommand{\owealthn}{{\gls[hyper=false]{owealthn}}}
\newcommand{\owealths}{{\gls[hyper=false]{owealths}}}
\newcommand{\owealth}{{\gls[hyper=false]{owealth}}}
\newcommand{\report}{{\gls[hyper=false]{report}}}
\newcommand{\coalition}{{\gls[hyper=false]{coalition}}}
\newcommand{\inequality}{{\gls[hyper=false]{inequality}}}
\newcommand{\constant}{{\gls[hyper=false]{constant}}}
\newcommand{\baseline}{{\gls[hyper=false]{baseline}}}
\newcommand{\agg}{{\gls[hyper=false]{agg}}}
\newcommand{\gc}{{\gls[hyper=false]{gc}}}
\begin{document}
\title{Inequality in the Age of Pseudonymity}
\author{
    Aviv Yaish\footnote{Email: \href{mailto:a@yai.sh}{a@yai.sh}} \\
    \small{Yale University \& IC3}
    \and
    Nir Chemaya\footnote{Email: \href{mailto:chemayan@bgu.ac.il}{chemayan@bgu.ac.il}} \\
    \small{Ben-Gurion University}
    \and
    Lin William Cong\footnote{Email: \href{mailto:will.cong@cornell.edu}{will.cong@cornell.edu}} \\
    \small{Cornell University ABFER,} \\
    \small{IC3, \& NBER}
    \and
    Dahlia Malkhi\footnote{Email: \href{mailto:dahliamalkhi@ucsb.edu}{dahliamalkhi@ucsb.edu}} \\
    \small{UC Santa Barbara}
}
\date{}
\maketitle
\begin{abstract}
Inequality measures such as the Gini coefficient are used to inform and motivate policymaking, and are increasingly applied to digital platforms.
We analyze how measures fare in pseudonymous settings that are common in the digital age.
A key challenge of such environments is the ability of actors to create fake identities under fictitious false names, also known as ``Sybils.''
While actors may do so to preserve privacy, we show that this can hamper inequality measurement: it is impossible for measures satisfying the literature's canonical set of desired properties to assess the inequality of an economy that harbors Sybils.
We characterize the class of all Sybil-proof measures, and prove they must satisfy relaxed versions of the established properties.
Furthermore, we show that the structure imposed restricts the ability to assess inequality at a fine-grained level.
We then apply our results to prove that popular measures are not Sybil-proof, with the famous Gini coefficient being but one example out of many.
Finally, we examine dynamics leading to the creation of Sybils in digital and traditional settings.
\end{abstract}

\section{Introduction}
The measurement of economic equality is a cornerstone of social science, providing tools and data to inform public discourse and motivate government policy \cite{connell2021inequality, auerbach2025public}.
Thus, it is unsurprising that inequality measures are used to study digital platforms \cite{nadler2022decentralized, feichtinger2024hidden}.
A famous example of such a measure is the \gls{GC} \cite{gini1912variabilita}, which ranges between a value of $0$ (if all individuals have equal wealth) and $1$ (if one entity holds all wealth).
Alarmingly, some found that the inequality of a prominent digital platform, Bitcoin, is ``like North Korea'' \cite{weisenthal2014how}, and ``more unequal than any other human society'' \cite{roubini2018bitcoin}, owing to a \gls{GC} of $0.88$ (see \cref{figure:InequalityComparison}).
Similar findings have been used to argue for a critical need to change the incentive mechanisms and governance structures of decentralized payment systems and \gls{DeFi} platforms to ensure they do not succumb to centralization \cite{gupta2018gini, nadler2022decentralized, sai2021characterizing, feichtinger2024hidden}.

\tikzsetnextfilename{InequalityComparison}
\begin{figure}
    \centering
    \scalebox{0.95}{\begin{tikzpicture}
    \begin{axis}[
        xlabel={Economy},
        ylabel={Gini Coefficient (GC)},
        xmajorgrids,
        ymajorgrids,
        xtick=data,
        table/col sep=comma,
        xticklabels from table={figures/InequalityComparison.csv}{name},
        xticklabel style={font=\tiny, rotate=90},
        xmin=-1,
        xmax=28.2,
    ]
        \addplot [scatter] table [x expr=\coordindex, col sep=comma, y=gini] {figures/InequalityComparison.csv};
    \end{axis}
    \end{tikzpicture}}
    \caption{Some argue that Bitcoin's \gls{GC} equals $0.88$ and is thus indicative of extreme inequality \cite{roubini2018bitcoin}, as this value positions it above any other global economy. Country-level data obtained from \cite{worldbank2025gini}.}
    \label{figure:InequalityComparison}
\end{figure}

However, applying classical measures directly to pseudonymous platforms overlooks a fundamental difference: the potential disconnect between observable identities and true economic actors.
This disconnect is not a mere technicality but a first-order feature of digital economies that invalidates the core assumptions of traditional inequality analysis.
Before we can answer how unequal these platforms are, we must first tackle a more fundamental question:
\emph{can we even measure their inequality reliably?}

\subsection{This Work}
This paper provides the first axiomatic answer to that question.
To do so, we establish the theoretical foundations for inequality measurement in the age of pseudonymity.

We show that reliable measurements are inherently challenging to carry out in pseudonymous settings, as common to digital platforms.
This is due to actors possibly holding their wealth across several Sybil identities, i.e., using multiple fictitious false names.
In fact, empirical work on measuring inequality in practice shows that such considerations are generally applicable: consider how individuals may hold wealth across seemingly disparate shell corporations to avoid taxes \cite{alstadsaeter2019tax, lustig2020missing, martinangeli2024inequality}.
Such behavior is common among sophisticated actors \cite{auerbach2000capital, gruber2002elasticity, slemrod2007cheating, gabaix2016dynamics, auerbach2025public}, making the setting applicable to non-pseudonymous markets.

To analyze the impact of Sybils, we use the literature's canonical set of properties desired from inequality measures \cite{cowell2011measuring}.
Thus, measures should receive a scalar vector representing the distribution of wealth over an economy's population, and ideally satisfy several axioms:
1. the transfer principle (i.e., given a population ordered by wealth, order-preserving wealth transfers from rich individuals to poorer ones reduce inequality),
2. symmetry (that is, invariance to permutations),
3. insensitivity to population size, and
4. scale independence (i.e., invariance under uniform scaling).

Before we elaborate on our results, we wish to illustrate that the impact of Sybils can be counterintuitive.
One might assume that splitting wealth across multiple accounts would generally decrease measured inequality.
That is, that observed inequality is a lower bound on the true one.
This intuition is \emph{false}, as we show in \cref{exmp:SybilsIncreaseInequality}.
\begin{example}[The Transfer Principle Implies Sybils Can Increase Inequality]
    \label{exmp:SybilsIncreaseInequality}
    Consider a wealth distribution $\left(10, 10\right)$, and suppose the second actor splits its wealth over two Sybil accounts, resulting in $\left(10, 9, 1\right)$.
    This Sybil manipulation actually increases inequality when considering measures satisfying the transfer principle, such as the \gls{GC}.
\end{example}

Having established this, we proceed to discuss our results.
We begin our analysis by highlighting the tension between Sybil-proofness and the literature's standard desired properties (\cref{sec:Impossibilities}).
Our first main result is an impossibility: no Sybil-proof inequality measure can satisfy the transfer principle.
While the transfer principle is a hallmark of inequality measures, one may be tempted to circumvent the impossibility by replacing it with a mild alternative.
Thus, we consider the egalitarian zero principle, which requires that all egalitarian wealth distributions (i.e., where all entries are identical) should correspond to the lowest possible inequality value, while any other distribution would have strictly larger inequality.
As we show, this again results in an impossibility.

Having delineated what is not possible, we map the design space for everything that is possible by fully characterizing the class of Sybil-proof measures (\cref{sec:Characterizing}).
As we show, this class is exactly the family of metrics which are \emph{sum-dependent}, meaning that given an input wealth distribution, its inequality is calculated solely based on the sum of wealth.
This is a natural companion to the impressive line of work characterizing decomposable measures \cite{bourguignon1979decomposable, shorrocks1980class, shorrocks1984inequality, vega2013axiomatic, heikkuri2025subgroup}, i.e., measures that, given aggregate information on two sub-populations, can calculate their union's inequality.
In a way, our sum dependent family considers a further condensed form of aggregation.
As we prove, this is not only a feature, but a necessity.

While we begin with our impossibilities and only then provide a complete characterization, it is possible, of course, to reverse this order: start with the characterization and derive the impossibilities as corollaries.
We have chosen this structure as this work's central finding is the problem itself: the tension between robustness to pseudonymity and the very properties that give inequality measures their meaning.
This allows us to derive the impossibilities using the literature's basic building blocks, without requiring the novel definitions and machinery used for the characterization.
This then positions the characterization as \emph{explaining} the structural limits leading to the negative results, rather than \emph{driving} them.

We culminate by showing our results' applicability to other settings and notable metrics.
In total, we offer a toolbox covering a large class of measures, including the \gls{GC}, \gls{CV}, \gls{AM}, \gls{GE} metrics, and the \gls{HH}, \gls{TL}, and \gls{TT} indices.

\paragraphNoSkip{Structure}
We begin by going over related work in \cref{sec:RelatedWork}, and proceed to formalize the setting in \cref{sec:Reasoning}.
Then, we present our impossibilities in \cref{sec:Impossibilities} and our characterization in \cref{sec:Characterizing}.
We culminate with a discussion on the generality of our results in \cref{sec:Discussion}.
Full proofs of all results and a comprehensive overview of the related literature are provided in \cref{sec:Proofs,sec:AdditionalRelatedWork}, correspondingly.

\section{Related Work}
\label{sec:RelatedWork}
To the best of our knowledge, our work is the first to provide a theoretic framework for inequality measurement under pseudonymity.
We contribute to several bodies of work, including the literature on the axiomatic approach to inequality measures, the limitations of such measures, and the analyses of both Sybils and collusion in economic settings.
% We proceed by giving an overview of related work, and provide additional details in the full version.

\paragraphNoSkip{Inequality Measures}
The axiomatic study of inequality includes the foundational \cite{gini1912variabilita, dalton1920measurement, atkinson1970measurement}, which establish the core properties desired in measures.
Following are the elegant characterizations of \cite{bourguignon1979decomposable, foster1983axiomatic, shorrocks1987transfer}, that consider refinements and additions to the canonical desiderata, e.g., decomposability (a property which we discuss in \cref{sec:Discussion}).
Decomposable measures that are given the within-group and between-group inequalities of two populations, can ``staple'' the two together to compute their union's inequality.
Some recent work on the axiomatic approach includes the characterization of Theil inequality ordering by \cite{vega2013axiomatic}, the study of ``unit-consistent'' (i.e., scale independent, as in \cref{def:Scale}) indices by \cite{rio2010new}, and the analyses of decomposition carried out by \cite{nevescosta2019not, heikkuri2025subgroup}.
Work on the limitations of inequality measures includes analyses of sampling flaws, e.g., typing errors that affect the magnitude of sampled wealth \cite{cowell1996robustness}, incomplete representation of sub-populations \cite{korinek2006survey}, wealth under-reporting  \cite{lustig2020missing}, and inconsistent data collection methodology \cite{atkinson2001promise}.
We augment the existing substantial body of work by also considering Sybil-proofness and collusion-resistance in the context of inequality measures.
The comprehensive \cite{cowell2011measuring, stiglitz2025origins} present a summa of the topic, and the thorough \cite{yitzhaki2013gini} focuses on the \gls{GC}.

\paragraphNoSkip{Inequality and Pseudonymity}
Our theoretic framework nicely complements empirical measurement studies.
Thus, some have tried to empirically estimate inequality in blockchain protocols \cite{chemaya2025quantifying}, also while heuristically identifying Sybils \cite{fritsch2024analyzing}.
However, notable works note that methods are hard to validate due to missing ground-truth data \cite{victor2020address}, and that actors can create identities using cryptographic protocols that provide formal unlinkability guarantees \cite{wahrstatter2024blockchain}.
Other related studies consider social networks \cite{nielsen2013do}, open-source communities \cite{chelkowski2016inequalities}, peer-to-peer file-sharing \cite{hales2009bittorrent}, distributed payment systems \cite{ovezik2025sok}, and \gls{DeFi} platforms \cite{nadler2022decentralized}.

\paragraphNoSkip{Sybils and Collusion}
The theoretic literature on collusion and Sybils (sometimes called ``false names'') is extensive, and highlights the importance of these topics for digital settings \cite{conitzer2010using}.
However, no previous work considered inequality metrics.
The literature studied auctions \cite{yokoo2004effect}, voting \cite{wagman2008optimal}, file sharing protocols \cite{zohar2009adding}, querying mechanisms \cite{zhang2021sybil}, multi-level marketing \cite{drucker2012simpler}, consensus protocols \cite{chen2019axiomatic}, and transaction fee mechanisms \cite{roughgarden2024transaction}, among others.
Also related are foundational works on rights arbitration \cite{oneill1982problem} and bankruptcy \cite{aumann1985game}, where manipulations include introducing new creditors \cite{dagan1993bankruptcy}, with later work showing that some classic solutions are Sybil-proof or collusion-resistant \cite{belotti2021bankruptcy}.
While the context and axiomatic frameworks differ, the literature demonstrates that these manipulations are recurring challenges across settings.

\section{Reasoning About Hidden Economies}
\label{sec:Reasoning}
We consider the problem of measuring the inequality of an economy characterized by a disconnect between its underlying hidden wealth distribution among real actors and its observable distribution, as actors can report wealth using Sybil identities under false names.

\subsection{The Economy}
\paragraphNoSkip{Agents}
The economy is populated by $\wealthn$ real actors with a wealth distribution $\wealth \define \left( \wealth_1, \dots, \wealth_\wealthn \right)$, both $\wealthn, \wealth$ are hidden, and $\owealthn$ identities whose wealth distribution $\owealth \define \left( \owealth_1, \dots, \owealth_\owealthn \right)$ is observable.
Bridging the surface-level $\owealth$ and hidden $\wealth$ is an unobservable matrix $\report \in \mathbb{R}_{\ge 0}^{\wealthn \times \owealthn}$, where $\report_{i,j}$ is the fraction of wealth reported by actor $i$ under identity $j$.

\paragraphNoSkip{Manipulations}
We say a real actor $i \in \wealths$ is \emph{truthful} \gls{iff} it reports its wealth using exactly one identity and does not create any identity with no wealth.
However, actors need not be truthful.
Thus, $i \in \wealths$ can create \emph{Sybils} by distributing wealth across several identities.
Actors can also \emph{collude} and pool wealth together, defined as a set $\coalition \subseteq \wealths$ with $\lvert \coalition \rvert > 1$ and an identity $j \in \owealths$ where $\forall i \in \coalition: \report_{i,j} > 0$.
Finally, total observable wealth must match hidden wealth, per \cref{def:Conservation}.
This permits decreasing hidden wealth, e.g., due to ``burning'' \cite{gafni2025transaction}, if observed wealth decreases equally.
\begin{definition}[Wealth Conservation]
    \label{def:Conservation}
    We say that a report matrix $\report$ conserves an economy's wealth across its hidden and observable states \gls{iff} both the following hold:
    \begin{align*}
        \begin{cases}
        \forall i \in \wealths: 1 = \sum_{j \in \owealths} \report_{i,j}
        \\
        \forall j \in \owealths: \owealth_j = \sum_{i \in \wealths} \left( \report_{i,j} \wealth_i \right)
        \end{cases} 
    \end{align*}
\end{definition}

\subsection{Inequality Measure Axioms}
An inequality measure is a continuous function $\inequality: \bigcup_{k \in \mathbb{N}} \mathbb{R}_{\ge 0}^k \rightarrow \mathbb{R}$ mapping a wealth distribution, represented as a scalar vector of any size, to a scalar value.
The axiomatic inequality measure literature specifies a core set of desired properties \cite{rothschild1973some, fields1978inequality, cowell2011measuring, stiglitz2025some}.

First, measures should be insensitive to the economy's total amount of wealth or the units in which wealth is denominated, i.e., inequality should be invariant under uniform scaling.
This is captured by \cref{def:Scale}, which is attributed to \cite{atkinson1970measurement} by \cite{foster1983axiomatic}.
\begin{definition}[Scale Independence]
    \label{def:Scale}
    Measure $\inequality$ is \emph{scale independent} \gls{iff} for any $\alpha \in \mathbb{R}_{> 0}$ and any wealth distribution $x = \left(x_1, \dots, x_k \right) \in \mathbb{R}_{\ge 0}^k$ where $k \in \mathbb{N}$, then: $\inequality \left( x \right) = \inequality \left( \alpha \cdot x \right)$.
    Scalar multiplication is defined in the standard way, i.e., $\alpha \cdot x = \left( \alpha \cdot x_1, \dots, \alpha \cdot x_k \right)$.
\end{definition}
Inequality should be independent of population size, per \cref{def:Population}, which \cite{foster1983axiomatic} attributes to \cite{dalton1920measurement}.
Intuitively, a wealth distribution is compared to a counterfactual synthesized by duplicating it.
So, $\left(5, 10\right)$ should have inequality identical to $\left(5, 10, 5, 10\right)$.
\begin{definition}[Population Insensitivity]
    \label{def:Population}
    Measure $\inequality$ is \emph{population insensitive} \gls{iff} for any $k \in \mathbb{N}$ and any wealth distribution $x = \left(x_1, \dots, x_k \right) \in \mathbb{R}_{\ge 0}^k$, the inequality of $x$ is identical to that of $x \bigsqcup x$, i.e., $x$ concatenated with itself:
    \begin{align*}
        \inequality\left(x\right)
        =
        \inequality\left(x_1, \dots, x_k, x_1, \dots, x_k\right)
        =
        \inequality\left(x \bigsqcup x\right)
        .    
    \end{align*}    
\end{definition}
The axiom of symmetry (stated in \cref{def:Symmetry}), is assigned to \cite{dalton1920measurement} by \cite{atkinson1970measurement}.
This property postulates that inequality should be preserved under wealth distribution permutations.
E.g., both $\left( 5, 10 \right)$ and $\left( 10, 5 \right)$ should have the same inequality.
\begin{definition}[Symmetry]
    \label{def:Symmetry}
    Measure $\inequality$ is \emph{symmetric} \gls{iff} for any $k \in \mathbb{N}$, any permutation matrix $P \in \left\{0,1\right\}^{k \times k}$, and any wealth distribution $x = \left(x_1, \dots, x_k \right) \in \mathbb{R}_{\ge 0}^k$, we have that inequality is invariant to permutation:
    $\inequality \left( x \right) = \inequality \left( P x \right)$.
\end{definition}
The property given in \cref{def:TransferPrinciple}, sometimes called the Pigou-Dalton principle \cite{foster1983axiomatic} as it is traced to \cite{pigou1912wealth, dalton1920measurement}, is emblematic of inequality measures: equality should increase by a rank-preserving ``progressive'' transfer.
I.e., a strictly positive transfer where the sender is wealthier than the receiver both before and after the transfer.
\begin{definition}[Transfer Principle]
    \label{def:TransferPrinciple}
    Measure $\inequality$ satisfies the \emph{transfer principle} \gls{iff} for any $k \in \mathbb{N}_{\ge 2}$, any $x = \left(x_1, \dots, x_k \right) \in \mathbb{R}_{\ge 0}^k$, any $i, j \in \left[ k \right]$ with $x_i < x_j$, and any $\delta \in \left( 0, \frac{x_j - x_i}{2} \right]$, the inequality of $x$ is strictly greater than that of $x' = \left( x_1, \dots, x_i + \delta, \dots, x_j - \delta, \dots, x_k \right)$, which is identical to $x$ except a $\delta$ transfer from $j$ to $i$:
    $\inequality \left( x' \right) < \inequality \left( x \right).$
    Given $x, x'$ as before, the \emph{weak transfer principle} holds \gls{iff} post-transfer inequality is weakly lower: $\inequality \left( x' \right) \le \inequality \left( x \right).$
\end{definition}

\section{Impossibilities for Sybil-Proof Measures}
\label{sec:Impossibilities}
We now derive the first batch of our main results, showing that it is impossible for measures to withstand Sybil manipulations and simultaneously satisfy subsets of the canonical set of desired properties.
To do so, we enrich the canonical desiderata for inequality measures with \cref{def:SybilProof}, a new axiom that captures the essence of Sybil-resistance.
\begin{definition}[Sybil-Proofness]
    \label{def:SybilProof}
    A measure $\inequality$ is \emph{Sybil-proof} \gls{iff} it is invariant under any wealth conserving Sybil manipulation.
    That is, for any $\wealthn \in \mathbb{N}$, any hidden wealth distribution $\wealth = \left(\wealth_1, \dots, \wealth_\wealthn \right) \in \mathbb{R}_{\ge 0}^\wealthn$, and for any observable wealth distribution $\owealth = \left(\owealth_1, \dots, \owealth_\owealthn \right) \in \mathbb{R}_{\ge 0}^\owealthn$ generated from $\wealth$ via a wealth-conserving report matrix $\report \in \mathbb{R}_{\ge 0}^{\wealthn \times \owealthn}$ that involves Sybil manipulations, then we must have: $\inequality(\wealth) = \inequality(\owealth).$
\end{definition}
\cref{def:SybilProof} considers ``pure'' Sybil manipulations, strengthening our results by showing that they hold irrespective of collusion.
Moreover, we later show (e.g., in \cref{sec:Characterizing}) that the structural impact of Sybils on inequality measures \emph{necessitates} collusion-resistance.

\subsection{The Transfer Principle}
We now consider how the transfer principle (\cref{def:TransferPrinciple}) restricts robustness to Sybil manipulations.
We start with a warm-up, \cref{res:SybilImpossibilityScale}, to demonstrate the intricacies of the various axiomatic properties given in \cref{sec:Reasoning}.
Then, we follow by showing that an impossibility is obtainable using a minimal set of properties in \cref{res:SybilImpossibilityTransfer}.
\begin{restatable}{proposition}{resSybilImpossibilityScale}
    \label{res:SybilImpossibilityScale}
    No inequality measure can simultaneously satisfy the transfer principle, scale independence, population insensitivity, and Sybil-proofness.
\end{restatable}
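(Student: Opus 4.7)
The plan is to refute the conjunction of the four axioms by exhibiting a single two-actor witness economy in which Sybil-proofness collapses an unequal hidden distribution to a perfectly equal observable one, while the transfer principle, together with scale independence and population insensitivity, forces the inequality value to strictly decrease along the same collapse. Since the two chains start and end on the same pair of distributions but disagree on the relation between their inequalities, no such measure can exist.

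Concretely, I would take the hidden distribution $\wealth = (1, 3)$ and construct an explicit report matrix $\report$ in which the wealthy actor (wealth $3$) spreads its mass uniformly across three fresh identities (each receiving wealth $1$) while the poor actor (wealth $1$) remains truthful under its own single identity. Verifying the row- and column-sum conditions of \cref{def:Conservation} is routine, and because the wealthy actor reports under more than one identity, this is a genuine Sybil manipulation. The resulting observable distribution is $\owealth = (1,1,1,1)$, and Sybil-proofness then gives $\inequality(1,3) = \inequality(1,1,1,1)$. Population insensitivity, applied once, yields $\inequality(1,1,1,1) = \inequality(1,1)$; scale independence with $\alpha = 2$ yields $\inequality(1,1) = \inequality(2,2)$. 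Finally, the strict transfer principle applied to $\wealth = (1,3)$ with $i=1$, $j=2$, and $\delta = 1 \in \left(0, \tfrac{3-1}{2}\right]$ produces the rank-preserving progressive transfer to $(2,2)$, hence $\inequality(2,2) < \inequality(1,3)$. Chaining these four relations gives $\inequality(1,3) = \inequality(1,1,1,1) = \inequality(1,1) = \inequality(2,2) < \inequality(1,3)$, the desired contradiction.

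The mathematical content is light, so the main obstacle is not technical difficulty but bookkeeping. I expect the most error-prone step to be ensuring that the invocation of the transfer principle uses its \emph{strict} version as defined after \cref{def:TransferPrinciple}, since the weak version would yield only a non-strict inequality and would render all four axioms mutually compatible. A secondary care point is that $\delta$ is chosen at the upper endpoint $\tfrac{x_j - x_i}{2}$, which \cref{def:TransferPrinciple} explicitly permits but which should be flagged because it produces equal entries post-transfer. Once these boundary choices are made transparent, the chain of equalities and the single strict inequality close the argument cleanly and make visible which axiom each link relies on, matching the paper's framing of the result as a warm-up that exercises every item of the canonical desiderata.
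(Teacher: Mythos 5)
Your proof is correct. The report matrix you describe is wealth-conserving, the population-insensitivity step is a legitimate instance of \cref{def:Population} with $x=(1,1)$, the scaling step uses $\alpha=2$, and the strict transfer with $\delta=1\in\left(0,\tfrac{3-1}{2}\right]$ is permitted by \cref{def:TransferPrinciple}; the chain $\inequality(1,3)=\inequality(1,1,1,1)=\inequality(1,1)=\inequality(2,2)<\inequality(1,3)$ closes the contradiction. The route is genuinely different from the paper's, though. The paper starts from an \emph{arbitrary} observable distribution, uses population insensitivity and scale independence up front to duplicate and re-normalize it, reconciles the duplicated identities, and then derives the clash between two terminal states that differ by a progressive transfer yet are both Sybil-reachable from the same initial state --- so the decisive collision there is transfer principle versus Sybil-proofness on a reachable pair, with the other two axioms serving as stage-setting. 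You instead invoke Sybil-proofness exactly once, to identify the unequal witness $(1,3)$ with the egalitarian $(1,1,1,1)$, and then use population insensitivity and scale independence at the back end to carry that egalitarian distribution onto $(2,2)$, the exact output of a progressive transfer applied to $(1,3)$. Your version buys concreteness and brevity, with each axiom used exactly once on explicit numbers, and is closer in spirit to the equalizing manipulation that drives \cref{res:SybilImpossibilityEgalitarian}; the paper's version buys generality of the starting distribution and foreshadows the observation, made explicit at the end of its proof, that the decisive manipulations need only a single agent, anticipating the more minimal \cref{res:SybilImpossibilityTransfer}. Your two flagged care points --- using the strict rather than weak transfer principle, and taking $\delta$ at the closed endpoint --- are exactly the right ones.
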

Our proof delicately applies properties to produce wealth distributions that could feasibly be created with and without Sybil manipulations.
Thus, we first apply population insensitivity to double the population.
This doubles the economy's wealth, and so we use the scale independence property to halve the wealth back to the amount we started with.
Then, we perform a series of transfers where all but one of the ``original'' and ``doubled'' identities are reconciled back into a single identity.
This leaves us with two identities which are not reconciled.
To wrap-up the proof, we consider how transfers between these identities impact inequality, and, via the transfer principle, reach a contradiction to Sybil-proofness.

We continue with our first main result, \cref{res:SybilImpossibilityTransfer}.
\begin{restatable}{theorem}{resSybilImpossibilityTransfer}
    \label{res:SybilImpossibilityTransfer}
    No Sybil-proof measure satisfies the transfer principle.
\end{restatable}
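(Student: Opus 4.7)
The plan is to exhibit a concrete two-actor economy whose hidden wealth can be split into observable Sybils in two distinct ways that differ by a single progressive transfer, and then derive a direct contradiction between \cref{def:SybilProof} (which forces equality of the two observable inequalities) and \cref{def:TransferPrinciple} in its strict form (which forces strict inequality).

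Concretely, I would start from a hidden distribution $\wealth = (a, b)$ with $a \ge 0$ and $b > 0$, i.e.\ $\wealthn = 2$ real actors. Actor 1 reports truthfully under a single identity. For a parameter $\delta \in (0, b/2)$, I consider two wealth-conserving report matrices in which actor 2 splits its wealth into two Sybils, once as $(b/2,\,b/2)$ and once as $(b/2 - \delta,\,b/2 + \delta)$. Both matrices clearly satisfy \cref{def:Conservation} and involve genuine Sybil manipulations, so \cref{def:SybilProof} applied twice chains to give
\begin{equation*}
\inequality(a,\,b/2,\,b/2) \;=\; \inequality(\wealth) \;=\; \inequality(a,\,b/2 - \delta,\,b/2 + \delta).
\end{equation*}
The balanced split is obtained from the unbalanced one by a progressive transfer of size exactly $\delta = \tfrac{1}{2}\bigl((b/2+\delta) - (b/2-\delta)\bigr)$, which is the maximum transfer permitted by \cref{def:TransferPrinciple}. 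The strict transfer principle then yields $\inequality(a,\,b/2,\,b/2) < \inequality(a,\,b/2 - \delta,\,b/2 + \delta)$, contradicting the chain of equalities above.

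The construction is short, and the principal obstacle is conceptual rather than technical: one must recognize that a single actor's wealth admits a one-parameter family of Sybil splittings connected pairwise by progressive transfers, so Sybil-proofness collapses an entire orbit that the transfer principle insists on strictly ordering. Two details deserve care in writing up the argument. First, the strict version of \cref{def:TransferPrinciple} is essential, since the weak version is consistent with the forced equality and would not give a contradiction. Second, the argument uses only a single actor's local Sybil manipulation and invokes none of scale independence, symmetry, or population insensitivity, which is what makes \cref{res:SybilImpossibilityTransfer} a genuine sharpening of \cref{res:SybilImpossibilityScale}.
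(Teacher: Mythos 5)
Your proof is correct and follows essentially the same route as the paper's: both exploit the fact that distinct Sybil splittings of a single actor's wealth are connected by a progressive transfer, so Sybil-proofness forces an equality that the strict transfer principle forbids. The paper phrases this with a counterfactual single-actor economy and three splits of a unit of wealth rather than your two-actor setup with two splits, but the mechanism and the contradiction are identical.
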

To prove this, we start from some initial hidden wealth distribution, and show the observable distribution obtained from the truthful report is obtainable from a different hidden distribution and several distinct Sybil manipulations.
To get an impossibility, we ``tie'' the manipulations together by showing they can be obtained by performing a sequence of progressive transfers between distinct real actors, starting from the initial truthful distribution.
By the transfer principle, this must reduce inequality, reaching a contradiction.
While our proof considers general initial distributions, it can be illustrated via a simplified example.
\begin{example}[Concrete Demonstration of \cref{res:SybilImpossibilityTransfer}]
    Consider a single actor with all wealth $\left( 1 \right)$, and two possible observable distributions: $\left( 0.5, 0.5 \right)$ and $\left( 0.9, 0.1 \right)$.
    By the transfer principle, the latter distribution has strictly greater inequality.
    Sybil-proofness is violated: the two do not have identical inequality.
\end{example}

\subsection{Relaxing the Desiderata}
\label{sec:Relaxing}
One might hope that by relaxing this principle to its weak form, a meaningful Sybil-proof metric could be constructed.
However, the weak transfer principle permits metrics that cannot distinguish between distributions (as we show in \cref{sec:Characterizing}), thus highlighting one merit of the strict transfer principle.
To avoid that, we present in \cref{def:Egalitarian} a mild axiom called the egalitarian zero principle, which has been featured in prior work \cite{cowell2011measuring}.
\begin{definition}[Egalitarian Zero Principle]
    \label{def:Egalitarian}
    $\inequality$ satisfies the \emph{egalitarian zero principle} \gls{iff} for any $k \in \mathbb{N}$ and any distribution $x = \left( x_1, \dots, x_k \right)$, then $\inequality \left( x \right) = 0$ \gls{iff} $x$ is egalitarian ($\forall i_1, i_2 \in \left[k\right]: x_{i_1} = x_{i_2}$).
    The \emph{weak egalitarian zero principle} holds \gls{iff} for any egalitarian $x$: $\inequality \left( x \right) = 0$.
\end{definition}
Even in its non-weak form, the egalitarian zero axiom is a relaxed alternative to the transfer principle: a minimum is imposed on the measure, and, rather than requiring that \emph{any} progressive order-preserving transfer would lower inequality, this is only strictly enforced for transfers that result in full equality.
However, we show in \cref{res:SybilImpossibilityEgalitarian} that this relatively lax property still results in an impossibility.
\begin{restatable}{proposition}{resSybilImpossibilityEgalitarian}
    \label{res:SybilImpossibilityEgalitarian}
    No Sybil-proof inequality measure can satisfy the egalitarian zero principle.
\end{restatable}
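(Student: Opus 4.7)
The plan is to exhibit a single wealth-conserving Sybil manipulation that transports an egalitarian hidden distribution to a non-egalitarian observable one, thereby forcing an inequality value of $0$ onto a non-egalitarian vector in direct violation of the egalitarian zero principle.

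First I would take the simplest possible hidden economy: a single real actor ($\wealthn = 1$) with some strictly positive hidden wealth, say $\wealth = (c)$ for any $c > 0$. Since $\wealths = \{1\}$, the universal quantifier in \cref{def:Egalitarian} is satisfied trivially, so $\wealth$ is egalitarian and the (non-weak) egalitarian zero principle yields $\inequality(\wealth) = 0$.

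Next I would let this actor issue two pseudonymous identities ($\owealthn = 2$) with an unequal split via the report matrix $\report = \left( \tfrac{1}{4}, \tfrac{3}{4} \right)$. A quick check against \cref{def:Conservation} shows wealth conservation: the single row sums to $1$, and the column sums give $\owealth_1 = c/4$ and $\owealth_2 = 3c/4$. Since one real actor is reporting through two identities, this qualifies as a Sybil manipulation, and \cref{def:SybilProof} forces $\inequality(\owealth) = \inequality(\wealth) = 0$. But $\owealth = (c/4,\, 3c/4)$ is not egalitarian, so the egalitarian zero principle requires $\inequality(\owealth) \ne 0$, delivering the contradiction.

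The only real obstacle I anticipate is a possible objection that the one-actor case is degenerate or outside the intended scope of the model. In that event the argument adapts immediately: start instead from the egalitarian two-actor distribution $\wealth = (c, c)$ and let only the second actor Sybil-split, with $\report = \bigl(\begin{smallmatrix} 1 & 0 & 0 \\ 0 & 1/4 & 3/4 \end{smallmatrix}\bigr)$, producing $\owealth = (c,\, c/4,\, 3c/4)$, which is still non-egalitarian and still must be assigned inequality $0$ by Sybil-proofness. Either version of the construction closes the proof.
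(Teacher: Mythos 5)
Your proposal is correct and is essentially the paper's own argument run in the opposite direction: the paper starts from the non-egalitarian $(5,10)$ and Sybil-splits the richer identity to reach the egalitarian $(5,5,5)$, whereas you start from an egalitarian distribution and Sybil-split to reach a non-egalitarian one; in both cases Sybil-proofness forces equal inequality values while the egalitarian zero principle forces one to be $0$ and the other nonzero. Your anticipation of the singleton-distribution objection is well placed but unnecessary --- the paper itself treats a single-entry distribution as egalitarian (in the proof of \cref{res:EgalitarinZeroConstant}) --- and your two-actor fallback covers it regardless.
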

We prove this by showing a non-egalitarian distribution which, for a specific Sybil manipulation, can be transformed into an egalitarian distribution.
This creates a conflict between the egalitarian zero principle, which requires the two corresponding inequalities to differ, and Sybil-proofness, which necessitates all inequalities to be the same.

\section{Characterizing Sybil-Proof Measures}
\label{sec:Characterizing}
So far, we have shown that Sybil-proofness cannot co-exist with some of the commonly sought-after properties of the literature.
We now turn to fully characterize which measures are resistant to Sybil manipulations, and the properties which they \emph{can} satisfy.
We do this by first presenting a series of results showing that Sybil-proofness in some cases also implies other properties.
Then, we prove that requiring Sybil-proofness and certain other mild properties necessarily imposes a specific structure on measures.
We demonstrate this by presenting a concrete Sybil-proof measure, and then generalize it into a family containing all Sybil-proof metrics which are either scale independent or population insensitive.
Finally, we characterize the entire class of Sybil-proof inequality measures, derive all properties that they satisfy, and prove that they admit constructions differing from the aforementioned family.

\subsection{Axioms Implied by Sybil-Proofness}
We now present a set of useful results, showing the implications of imposing Sybil-proofness.
E.g., in \cref{res:SybilSymmetric} we show that Sybil-proofness has a nice by-product: symmetry.
\begin{restatable}{proposition}{resSybilSymmetric}
    \label{res:SybilSymmetric}
    Any Sybil-proof measure is symmetric.
\end{restatable}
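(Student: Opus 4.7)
The plan is to exhibit a single observable distribution that is reachable via wealth-conserving Sybil manipulations from both $x$ and $Px$, and then invoke \cref{def:SybilProof} twice to chain $\inequality(x) = \inequality(Px)$.

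Concretely, given $x = (x_1, \dots, x_k) \in \mathbb{R}_{\ge 0}^k$ and a permutation matrix $P$ realizing a permutation $\sigma$, I would take as common observable $\owealth \in \mathbb{R}_{\ge 0}^{2k}$ defined by $\owealth_i = \owealth_{k+i} = x_i / 2$ for every $i \in [k]$. Starting from hidden wealth $\wealth = x$, I would use the report matrix $\report$ in which each actor $i$ splits evenly between observable identities $i$ and $k+i$, so $\report_{i,i} = \report_{i,k+i} = 1/2$ and $0$ elsewhere. A direct check confirms this satisfies \cref{def:Conservation} and realizes $\owealth$, and since each actor distributes its wealth across two identities, this is a genuine Sybil manipulation. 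Repeating the argument from hidden $\wealth' = Px$, but now letting actor $i$ split its wealth $x_{\sigma(i)}$ evenly between observable identities $\sigma(i)$ and $k+\sigma(i)$, yields the same observable $\owealth$. Two applications of Sybil-proofness then give $\inequality(x) = \inequality(\owealth) = \inequality(Px)$, which is exactly symmetry.

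The main subtlety I anticipate is confirming that both report matrices genuinely qualify as Sybil manipulations rather than being truthful, since \cref{def:SybilProof} is stated as applying to reports that involve Sybils. Whenever at least one coordinate of $x$ is strictly positive, the corresponding actor actually splits its wealth across two identities, so the report is non-truthful. In the degenerate all-zero case the conclusion is immediate because $x = Px$ coordinate-wise. Coordinates of $x$ that happen to be zero still fall outside the truthful regime defined in \cref{sec:Reasoning}, because the corresponding rows of $\report$ create zero-wealth observable identities, so the reports remain within the scope of the Sybil-proofness axiom. No property of $\inequality$ beyond Sybil-proofness is used, which is what makes symmetry a by-product rather than an independent requirement.
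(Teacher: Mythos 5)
Your proof is correct. It uses the same underlying strategy as the paper's proof---triangulate between $\inequality(x)$ and $\inequality(Px)$ by finding one common distribution linked to both via a single application of \cref{def:SybilProof} each---but your pivot is dual to the paper's. The paper fixes a common \emph{hidden} distribution, a counterfactual single actor holding the total wealth $\sum_j \owealth_j$, and realizes $\owealth$ and $P\owealth$ as two different observable outcomes of that one actor's Sybil manipulations; you instead fix a common \emph{observable} distribution, the length-$2k$ vector in which every entry of $x$ appears twice at half value, and realize it from the two hidden distributions $x$ and $Px$. Both arguments are two applications of Sybil-proofness and nothing else. What your variant buys is that the population of real actors never changes and every actor individually performs a non-trivial split, which makes the check that the report genuinely ``involves Sybil manipulations'' (and your explicit handling of zero-wealth coordinates and the all-zero degenerate case) cleaner than in the single-actor construction; what the paper's variant buys is uniformity, since the same ``grand single-actor counterfactual'' device is reused verbatim in the proofs of \cref{res:SybilWeakTransfer} and \cref{res:SybilAggregation}, so establishing it here sets up the later arguments.
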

\cref{res:SybilSymmetric} serves as an illuminating companion to our first main result, \cref{res:SybilImpossibilityTransfer}, where we show that coupling Sybil-proofness with symmetry is directly at odds with the transfer principle.
In a way, our proof provides a new answer to Shakespeare's question, ``\emph{what's in a name?}''
That is because the symmetry property, also sometimes referred to as anonymity \cite{dasgupta1973notes}, is directly implied by Sybil-proofness: the presence of Sybil identities makes it necessary for a measure to be agnostic to permutations.
As we show, this extends further, and essentially requires measures to operate at a ``coarse'' level.

Next, we combine \cref{res:SybilSymmetric,res:SybilImpossibilityTransfer} to derive \cref{res:SybilWeakTransfer}, thus proving that Sybil-resistant measures must also adhere to the weak transfer principle.
Our proof shows that this is a delicate result, as one has to rule-out the possibility that progressive order-preserving transfers (per \cref{def:TransferPrinciple}) may \emph{harm} inequality.
To do so, we construct several Sybil manipulations where the wealth of real actors is chosen carefully as to ensure that one actor controls both identities that take part in transfers.
\begin{restatable}{proposition}{resSybilWeakTransfer}
    \label{res:SybilWeakTransfer}
    Any Sybil-proof inequality measure satisfies the weak transfer principle.
\end{restatable}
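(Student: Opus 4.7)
The plan is to show the stronger equality $\inequality(x) = \inequality(x')$ for every progressive order-preserving transfer, which trivially implies the weak inequality required by \cref{def:TransferPrinciple}. Fix $x = (x_1, \ldots, x_k)$, indices $i, j \in [k]$ with $x_i < x_j$, a step size $\delta \in (0, (x_j - x_i)/2]$, and let $x'$ be the post-transfer distribution. The cap $\delta \le (x_j - x_i)/2$ forces $0 \le x_i + \delta \le x_j - \delta$, so both post-transfer values are non-negative and the rank order at positions $i$ and $j$ is preserved.

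The heart of the argument is to stage a hidden economy in which both endpoints of the transfer are controlled by a single real actor, so that the transfer itself looks like a mere re-allocation of that actor's wealth across two of its Sybils. Concretely, I would build a hidden distribution $\wealth$ on $k - 1$ real actors: one ``consolidated'' actor holds hidden wealth $x_i + x_j$, while the remaining $k - 2$ actors carry the wealths $x_l$ for $l \ne i, j$. I would then exhibit two wealth-conserving report matrices (in the sense of \cref{def:Conservation}) in which the consolidated actor spawns two Sybils: under the first report it splits its holdings as $(x_i, x_j)$, and under the second as $(x_i + \delta, x_j - \delta)$. Both splits respect non-negativity thanks to the observation above, and each is a genuine Sybil manipulation as it uses two distinct observable identities for a single real actor.

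Applying \cref{def:SybilProof} to both reports yields $\inequality(\wealth) = \inequality(\owealth^{(1)})$ and $\inequality(\wealth) = \inequality(\owealth^{(2)})$, where $\owealth^{(1)}$ and $\owealth^{(2)}$ are permutations of $x$ and $x'$ respectively. Invoking \cref{res:SybilSymmetric} to absorb the reordering chains these equalities into $\inequality(x) = \inequality(\wealth) = \inequality(x')$, and in particular $\inequality(x') \le \inequality(x)$.

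The main subtlety, and the reason the statement flags this as delicate, is the insistence that a single hidden actor control both endpoints of the transfer: without that coupling, the two reports would produce observable distributions that Sybil-proofness alone cannot relate, and one could not rule out the possibility that a progressive transfer strictly increases inequality. A secondary technical point is the boundary case $\delta = (x_j - x_i)/2$, which collapses the two Sybils to equal wealth and is exactly why \cref{def:TransferPrinciple} states the cap as a closed inequality; verifying non-negativity and confirming that the two report matrices satisfy \cref{def:Conservation} are routine once the construction is fixed. Beyond these points, I do not anticipate further obstacles.
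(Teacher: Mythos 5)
Your proposal is correct and matches the paper's own argument: both stage a counterfactual hidden economy in which a single consolidated actor holds $x_i + x_j$ while the other $k-2$ actors are unchanged, and then relate the pre- and post-transfer distributions to this common hidden state via two Sybil manipulations. The only cosmetic difference is that you prove the equality $\inequality(x) = \inequality(x')$ directly, whereas the paper phrases the same construction as a proof by contradiction against a strictly increasing transfer.
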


We proceed with \cref{res:SybilScalePopulation}, where we highlight the interplay between Sybil-proofness, scale independence and population insensitivity.
In particular, we show that invariance under population duplication, the condition at the heart of \cref{def:Population}, can be viewed as a Sybil manipulation which is combined with some scaling of the economy's wealth.
\begin{restatable}{proposition}{resSybilScalePopulation}
    \label{res:SybilScalePopulation}
    Any scale independent Sybil-proof metric is population insensitive.
\end{restatable}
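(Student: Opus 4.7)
The plan is to realize population duplication as the composition of a uniform Sybil split with a uniform rescaling, exactly as the passage preceding the statement foreshadows. Fix any $k \in \mathbb{N}$ and any strictly positive distribution $x = (x_1, \dots, x_k) \in \mathbb{R}_{>0}^k$, which I interpret as a hidden wealth distribution over $k$ real actors. I would have every real actor $i$ split its reported share evenly across two freshly-created pseudonymous identities, encoded by the report matrix $\report \in \mathbb{R}_{\ge 0}^{k \times 2k}$ with $\report_{i, 2i-1} = \report_{i, 2i} = \tfrac{1}{2}$ and all other entries zero. One checks directly that this $\report$ satisfies \cref{def:Conservation} and constitutes a genuine Sybil manipulation (every actor strictly increases its number of identities), so the induced observable distribution is $\owealth = \bigl(\tfrac{x_1}{2}, \tfrac{x_1}{2}, \dots, \tfrac{x_k}{2}, \tfrac{x_k}{2}\bigr)$, and Sybil-proofness immediately yields $\inequality(x) = \inequality(\owealth)$.

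Next I would invoke scale independence with $\alpha = 2$ to lift $\owealth$ to the interleaved distribution $(x_1, x_1, x_2, x_2, \dots, x_k, x_k)$ without changing the inequality. Then I would appeal to \cref{res:SybilSymmetric}, which promotes Sybil-proofness to symmetry, to permute the interleaved vector into the concatenated form $(x_1, \dots, x_k, x_1, \dots, x_k) = x \bigsqcup x$. Chaining the three equalities gives $\inequality(x) = \inequality(x \bigsqcup x)$ for every strictly positive $x$.

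To finish, I would extend to arbitrary $x \in \mathbb{R}_{\ge 0}^k$ by a routine continuity argument: approximate $x$ by $x^{(n)} = x + \tfrac{1}{n}(1, \dots, 1) \in \mathbb{R}_{>0}^k$, apply the strictly positive case to get $\inequality(x^{(n)}) = \inequality(x^{(n)} \bigsqcup x^{(n)})$, and pass to the limit, using the continuity of $\inequality$ in each fixed dimension (baked into its definition) and the fact that concatenation is continuous.

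The only subtle point I expect is that \cref{def:Scale} is stated only for strictly positive distributions, which blocks a one-line derivation on $\mathbb{R}_{\ge 0}^k$ and forces the continuity extension at the end. Beyond that wrinkle, the proof is essentially a bookkeeping exercise once the ``halve-then-double'' decomposition of population duplication is spotted and \cref{res:SybilSymmetric} is used to reshuffle interleaved duplicates into the concatenated form required by \cref{def:Population}.
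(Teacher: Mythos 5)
Your proof is correct and takes essentially the same route as the paper: population duplication is realized as a uniform two-way Sybil split followed by scaling by $2$. The only differences are cosmetic --- the paper indexes each actor's new identity at position $j + k$ so the split lands directly in the block form $x \bigsqcup x$ (avoiding your extra appeal to \cref{res:SybilSymmetric} to undo the interleaving), and it does not spell out the continuity extension to distributions with zero entries that you carefully add to cover the strict-positivity restriction in \cref{def:Scale}.
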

This result is obtained by performing a Sybil manipulation where each identity splits its wealth equally across two identities, and then scaling the resulting distribution by $2$.

\subsection{Aggregation Invariance}
We prove an analogous result to \cref{res:SybilScalePopulation}, i.e., coupling Sybil-proofness with the population insensitivity property implies scale independence (see \cref{res:SybilPopulationScale}).
However, this requires stronger technical heavy-lifting and additional groundwork, that we now perform.
Fortunately, this very same groundwork results in several tools which are of immense use through the remainder of the paper.

We continue by presenting \cref{def:Aggregation}, and showing in \cref{res:SybilAggregation} that this axiom is equivalent to Sybil-proofness.
Intuitively, this axiom implies that a measure does not need to observe an economy's entire wealth distribution;
instead, it can observe a ``summarized'' distribution which is shorter by $1$, if the wealth of the omitted index of the original distribution is spread over the summarized distribution's indices.
\begin{definition}[Aggregation Invariance]
    \label{def:Aggregation}
    Measure $\inequality$ is \emph{aggregation invariant} \gls{iff} for any $x = \left(x_1, \dots, x_k \right) \in \mathbb{R}_{\ge 0}^k$ with $k \in \mathbb{N}_{\ge 2}$, and any aggregation matrix $\agg \in \mathbb{R}_{\ge 0}^{\left(k-1\right) \times k}$ such that $\sum_{i \in \left[k-1\right]} \left( \agg x \right)_i = \sum_{j \in \left[k\right]} x_j$, then: 
    $\inequality \left( x \right) = \inequality \left( \agg x \right).$
\end{definition}
\begin{restatable}{theorem}{resSybilAggregation}
    \label{res:SybilAggregation}
    $\inequality$ is aggregation invariant \gls{iff} it is Sybil-proof.
\end{restatable}
We prove the first direction (Sybil-proofness $\Rightarrow$ aggregation-invariance) by assuming some wealth distribution and aggregation matrix, and constructing a counterfactual distribution where a single actor initially controls all wealth.
This is followed by presenting two Sybil manipulations: one ending in the assumed distribution, and another that arrives at the result of applying the aggregation matrix to the assumed distribution.
We show the second direction by letting some hidden and observable wealth distributions, and inductively constructing a sequence of aggregation matrices that take both distributions to a counterfactual ``initial'' distribution where again only a single actor is present.

We follow with one of our main results, \cref{res:SybilPopulationConstant}, where we show that imposing population insensitivity on Sybil-proof measures must results in some constant measure.
\begin{restatable}{theorem}{resSybilPopulationConstant}
    \label{res:SybilPopulationConstant}
    If $\inequality$ is a population insensitive Sybil-proof inequality metric, then it must be expressible as a constant function.
    That is, there is some $c \in \mathbb{R}$ such that for any distribution $x = \left(x_1, \dots, x_k \right) \in \mathbb{R}_{\ge 0}^k$ where $k \in \mathbb{N}$: $\inequality \left( x \right) = c$.
\end{restatable}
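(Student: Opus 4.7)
The plan is to combine \cref{res:SybilAggregation} (which equates Sybil-proofness with aggregation invariance) with population insensitivity and the continuity of $\inequality$. The crucial intermediate claim is that aggregation invariance alone forces $\inequality(x)$ to depend only on the total wealth $\sum_i x_i$. Given any $x = (x_1, \dots, x_k) \in \mathbb{R}_{\ge 0}^k$ with $k \ge 2$, I would construct a simple sum-preserving aggregation matrix that leaves the first $k-2$ coordinates untouched and merges $x_{k-1}, x_k$ into a single entry of value $x_{k-1} + x_k$. By \cref{res:SybilAggregation}, this operation preserves $\inequality$, and iterating it collapses $x$ to the length-one vector $(s)$ where $s = \sum_{i=1}^k x_i$. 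Defining $f: \mathbb{R}_{\ge 0} \to \mathbb{R}$ by $f(s) = \inequality((s))$, this shows $\inequality(x) = f\!\left(\sum_i x_i\right)$ for every input $x$ in the domain.

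With the sum-dependence established, population insensitivity closes the argument. For any $x$, $\inequality(x) = \inequality(x \bigsqcup x)$, and since concatenation doubles the total wealth, this translates into the functional equation $f(s) = f(2s)$ for every $s \in \mathbb{R}_{\ge 0}$. Iterating yields $f(s) = f(s/2^n)$ for every $n \in \mathbb{N}$, and continuity of $\inequality$ (restricted to length-one inputs) makes $f$ continuous. Letting $n \to \infty$ then gives $f(s) = f(0)$ for every $s \ge 0$, so $\inequality$ is the constant function with value $c = f(0) = \inequality((0))$, which is exactly what the theorem claims.

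The only step that requires any real care is the sum-dependence lemma, which is essentially an unpacking of \cref{def:Aggregation}. I expect this to be routine once the explicit merging matrix is written down, but one must verify at each step of the iteration that the intermediate dimension still satisfies $k \ge 2$ so that \cref{def:Aggregation} applies, and observe that the base case $k = 1$ is handled trivially by the very definition of $f$. After that, the remainder is a one-line functional equation plus a continuity limit with no further complications.
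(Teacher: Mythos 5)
Your proposal is correct and follows essentially the same route as the paper's proof: reduce to sum-dependence via \cref{res:SybilAggregation}, derive the halving/doubling functional equation from population insensitivity combined with a Sybil split into two equal identities, and conclude by continuity at $0$ that every value equals $\inequality\left(\left(0\right)\right)$. The only cosmetic difference is that the paper argues by contradiction and first dispatches a separate (logically unnecessary) case for scale-independent measures, whereas your direct formulation via $f(s)=f(2s)$ streamlines the same argument.
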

The proof is by contradiction: given two distinct distributions resulting in different inequalities, both can be aggregated into each one's corresponding sum via \cref{res:SybilAggregation}.
Then, by repeated applications of Sybil-proofness and population insensitivity, we show that the measure does not only admit inequality-preserving population doubling, but also wealth halving.
Due to the literature's standard continuity assumption \cite{shorrocks1984inequality}, the inequality at the limit of any given starting wealth distribution must equal that of the singleton distribution with no wealth, i.e., $\inequality \left( 0 \right).$
This shows that, in fact, population insensitivity is quite a strong property when used in tandem with Sybil-proofness.
In our full characterization of Sybil-proofness, we also consider measures which are not population insensitive, and show that indeed the broad class of Sybil-proof measures admits other types of functions.

Having assembled the necessary tools, we prove the analog of \cref{res:SybilScalePopulation} where the roles of scale independence and population insensitivity are reversed by applying \cref{res:SybilPopulationConstant} and noting that constant metrics are scale independent.
\begin{restatable}{corollary}{resSybilPopulationScale}
    \label{res:SybilPopulationScale}
    Any population insensitive Sybil-proof measure is also scale independent.
\end{restatable}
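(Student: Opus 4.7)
The plan is to derive this as an essentially immediate consequence of \cref{res:SybilPopulationConstant}, which has already done the heavy lifting. Indeed, \cref{res:SybilPopulationConstant} tells us that whenever $\inequality$ is both population insensitive and Sybil-proof, there exists some $c \in \mathbb{R}$ such that $\inequality(x) = c$ for every distribution $x \in \mathbb{R}_{\ge 0}^k$ and every $k \in \mathbb{N}$. So the first and only substantive step is to invoke this theorem on the hypothesized measure.

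Once we know $\inequality$ is a constant function, scale independence in the sense of \cref{def:Scale} becomes a triviality. Concretely, fix any $\alpha \in \mathbb{R}_{> 0}$, any $k \in \mathbb{N}$, and any $x = (x_1, \dots, x_k) \in \mathbb{R}_{> 0}^k$. Then $\alpha \cdot x \in \mathbb{R}_{> 0}^k$ as well, so applying the constant representation on both sides yields $\inequality(x) = c = \inequality(\alpha \cdot x)$, which is exactly the condition required by \cref{def:Scale}. This completes the argument.

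There is essentially no obstacle to overcome here: the corollary is a one-line consequence of a prior result, and I expect the write-up to consist of a single sentence invoking \cref{res:SybilPopulationConstant} followed by the observation that constants are scale invariant. The only thing worth being slightly careful about is that \cref{def:Scale} restricts attention to strictly positive distributions in $\mathbb{R}_{> 0}^k$, but since scalar multiplication by a positive $\alpha$ preserves strict positivity, the domain issue does not create any complication. I would frame the result as a sanity-check corollary of \cref{res:SybilPopulationConstant} that mirrors \cref{res:SybilScalePopulation}, but note that while \cref{res:SybilScalePopulation} has genuine content, its converse direction in this setting collapses to the triviality that constant functions satisfy every invariance one could ask for.
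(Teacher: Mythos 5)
Your proposal matches the paper's proof exactly: both invoke \cref{res:SybilPopulationConstant} to conclude that the measure is constant, and then observe that a constant function trivially satisfies \cref{def:Scale}. The argument is correct and complete.
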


\subsection{The Family of Constant Measures}
The previous results seem to point to the possibility that large swaths of the design space of Sybil-proof measures are occupied by metrics that are expressible using constant functions.
Before carrying out our full characterization of Sybil-proof metrics, we analyze one specific such measure (\cref{def:MeasureConstant}), and follow by considering the family of constant measures (\cref{def:FamilyConstant}).
We prove in \cref{res:DesiderataConstant} that one particular constant measure can satisfy the complete desiderata, where the transfer and egalitarian zero principles are in their weak form.
While the large set of properties may paint a seemingly rosy picture, the measure cannot be used to meaningfully compare two wealth distributions, being a constant function.
Our characterization of the constant measure family plays a key technical role in later results, as, by satisfying certain mild properties, any measure collapses to this form.
\begin{definition}[Baseline Measure $\baseline$]
    \label{def:MeasureConstant}
    Given any wealth distribution $x = \left(x_1, \dots, x_k \right) \in \mathbb{R}_{\ge 0}^k$ where $k \in \mathbb{N}$, the \emph{baseline measure} $\baseline$ is defined as:
    $\baseline \left( x \right) \define 0.$
\end{definition}
\begin{restatable}{proposition}{resDesiderataConstant}
    \label{res:DesiderataConstant}
    The measure $\baseline$ is scale independent, population insensitive, symmetric, Sybil-proof, and satisfies the weak egalitarian zero and weak transfer principles.
\end{restatable}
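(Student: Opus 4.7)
The plan is to verify each of the six properties by direct appeal to the definition $\baseline(x) \define 0$. Because $\baseline$ returns the scalar $0$ on every input vector of every length, each axiom reduces to checking either $0 = 0$ or $0 \le 0$, so no structural argument is needed; the proposition is essentially a sanity check that the constant-zero function is a legitimate (if uninformative) inequality measure and, in particular, that the class of Sybil-proof measures is nonempty.

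Concretely, I would proceed one axiom at a time. For scale independence (\cref{def:Scale}), I would fix an arbitrary $\alpha \in \mathbb{R}_{>0}$ and $x \in \mathbb{R}_{>0}^k$ and observe $\baseline(\alpha \cdot x) = 0 = \baseline(x)$. For population insensitivity (\cref{def:Population}), the same two evaluations give $\baseline(x \bigsqcup x) = 0 = \baseline(x)$. For symmetry (\cref{def:Symmetry}), for any permutation matrix $P$, $\baseline(Px) = 0 = \baseline(x)$. For Sybil-proofness (\cref{def:SybilProof}), I note that for any hidden distribution $\wealth$ and any observable distribution $\owealth$ obtained from any wealth-conserving report matrix $\report$, both evaluate to $0$, so $\baseline(\wealth) = \baseline(\owealth)$; here the report matrix plays no role at all. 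The weak egalitarian zero principle (\cref{def:Egalitarian}) asks that $\baseline(x) = 0$ on egalitarian inputs, which is immediate since $\baseline$ vanishes everywhere. Finally, for the weak transfer principle (\cref{def:TransferPrinciple}), given any $x$ and its post-transfer counterpart $x'$, we have $\baseline(x') = 0 \le 0 = \baseline(x)$.

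The only step that warrants any comment is the weak transfer principle, where one must be careful to invoke the \emph{weak} form: the strict form would fail for $\baseline$, since equal values cannot satisfy a strict inequality. This is precisely why the statement of \cref{res:DesiderataConstant} qualifies the transfer and egalitarian zero principles as weak, and it is what makes $\baseline$ consistent with the impossibilities of \cref{res:SybilImpossibilityTransfer,res:SybilImpossibilityEgalitarian}, which ruled out the non-weak versions. There is no genuine obstacle in this proof; its role is as scaffolding, producing a concrete Sybil-proof measure that satisfies every relaxed property on the books, and thereby demonstrating tightness of the impossibility results while also foreshadowing the broader family of constant measures analyzed subsequently.
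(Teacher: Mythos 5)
Your proposal is correct and matches the paper's argument in substance: every axiom collapses to $0 = 0$ (or $0 \le 0$ for the weak transfer principle), and your caution about needing the weak rather than strict forms is exactly the right observation. The only cosmetic difference is that the paper verifies Sybil-proofness, scale independence, and the weak egalitarian zero principle directly but obtains symmetry, population insensitivity, and the weak transfer principle by citing \cref{res:SybilSymmetric,res:SybilScalePopulation,res:SybilWeakTransfer}, whereas you check all six directly --- both routes are trivially valid here.
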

This result can be proven by going back to the definitions of the different axioms and seeing that, due to the measure being constant, these properties hold by definition.

Having observed that $\baseline$ satisfies the desiderata, including the weak egalitarian zero and transfer principles, consider that the specific choice that the metric would always output $0$ is, while helpful in satisfying the weak egalitarian zero principle, also somewhat arbitrary.
Indeed, $\baseline$ can be extended into a family of constant inequality metrics, as we do in \cref{def:FamilyConstant}.
We prove in \cref{res:FamilyConstant} that this family is notable for comprising all scale-independent and population insensitive Sybil-proof measures.
\begin{definition}[Constant Measure Family]
    \label{def:FamilyConstant}
    The \emph{constant measure family} is the set $C \define \left\{ \constant \, \lvert \, c \in \mathbb{R} \right\}$, where for all $k \in \mathbb{N}$ and $x = \left( x_1, \dots, x_k \right)$ then: $\forall \constant \in C: \constant \left( x \right) = c.$
\end{definition}
\begin{restatable}{corollary}{resFamilyConstant}
    \label{res:FamilyConstant}
    Any Sybil-proof measure $\inequality$ which is either scale independent or population insensitive must be part of the constant measure family and satisfy the symmetry and weak transfer principles.
    Moreover, $\inequality$ satisfies the weak egalitarian zero principle \gls{iff} $\inequality = \baseline$.
\end{restatable}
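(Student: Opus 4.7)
The plan is to assemble the corollary directly from the machinery already developed. First, I would reduce the two hypotheses to a single case: if $\inequality$ is scale independent and Sybil-proof, then by \cref{res:SybilScalePopulation} it is also population insensitive. So in either hypothesis we may assume $\inequality$ is a population insensitive Sybil-proof measure. Applying \cref{res:SybilPopulationConstant} then immediately yields some $c \in \mathbb{R}$ such that $\inequality(x) = c$ for every $x \in \bigcup_{k \in \mathbb{N}} \mathbb{R}_{\ge 0}^k$, which is exactly the statement that $\inequality = \constant$, placing $\inequality$ in the family defined in \cref{def:FamilyConstant}.

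Next, I would derive the two extra properties. Symmetry follows directly from \cref{res:SybilSymmetric}, since $\inequality$ is Sybil-proof by assumption. The weak transfer principle follows directly from \cref{res:SybilWeakTransfer} for the same reason. (Alternatively, both can be read off the constant representation, but invoking the prior results keeps the argument uniform and short.)

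For the final biconditional, I would argue separately in each direction. For the ``if'' direction, if $\inequality = \baseline$ then the weak egalitarian zero principle holds by \cref{res:DesiderataConstant}. For the ``only if'' direction, suppose $\inequality$ satisfies the weak egalitarian zero principle. Since we have already shown $\inequality = \constant$ for some $c$, evaluating on any egalitarian distribution (e.g., the singleton $x = (0)$, or $(1,1)$) gives $c = \inequality(x) = 0$ by the weak egalitarian zero principle applied to that egalitarian $x$. Hence $c = 0$ and $\inequality = \baseline$, per \cref{def:MeasureConstant}.

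No step here looks hard, since all the real content is packed into \cref{res:SybilSymmetric,res:SybilWeakTransfer,res:SybilScalePopulation,res:SybilPopulationConstant,res:DesiderataConstant}. The only point requiring slight care is making explicit that ``constant function on the whole domain'' is exactly the definition of membership in $\{\constant \mid c \in \mathbb{R}\}$ from \cref{def:FamilyConstant}, so that \cref{res:SybilPopulationConstant} can be invoked verbatim without any additional calculation.
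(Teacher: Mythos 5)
Your proposal is correct and follows essentially the same route as the paper: it derives symmetry and the weak transfer principle from \cref{res:SybilSymmetric,res:SybilWeakTransfer}, obtains membership in the constant family via the chain \cref{res:SybilScalePopulation} then \cref{res:SybilPopulationConstant}, and settles the biconditional using \cref{res:DesiderataConstant} for one direction and the constancy of $\inequality$ for the other. The only cosmetic difference is that you prove the ``only if'' direction by evaluating the constant on an egalitarian distribution, whereas the paper argues by contraposition ($c \ne 0$ violates the principle); the content is identical.
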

The proofs that each of the properties mentioned by \cref{res:FamilyConstant} are satisfied are corollaries of \cref{res:SybilSymmetric,res:SybilWeakTransfer,res:SybilScalePopulation,res:SybilPopulationConstant}, with the exception of the egalitarian zero principle, which is a corollary of \cref{res:DesiderataConstant}.

\subsection{Completing the Characterization}
To break away from the degenerate structure imposed by \cref{res:FamilyConstant}, we turn back to \cref{res:SybilAggregation}.
By following the technique applied in that result's proof to its logical conclusion, one can view any Sybil-proof measure as essentially a function which, given a wealth distribution vector, collates it into a single scalar value: its total wealth.
We formalize this intuition in \cref{def:MeasureSum,res:FamilySum}.
\begin{definition}[Sum Dependence]
    \label{def:MeasureSum}
    Measure $\inequality$ is \emph{sum-dependent} \gls{iff} for any distributions $x, y$ where $x = \left(x_1, \dots, x_{k_1} \right) \in \mathbb{R}_{\ge 0}^{k_1}$, $y = \left(y_1, \dots, y_{k_2} \right) \in \mathbb{R}_{\ge 0}^{k_2}$ and $k_1, k_2 \in \mathbb{N}$:
    $
    \inequality \left( x \right)
    =
    \inequality \left( y \right)
    \Leftrightarrow
    \sum_{i \in \left[k_1\right]} x_i
    =
    \sum_{j \in \left[k_2\right]} y_j
    .
    $
\end{definition}
Notably, measures satisfying \cref{def:MeasureSum} treat any distribution as if it originates from a collusion by the grand coalition of all actors.
This is a form of ``perfect'' coordination, akin to a world where all identities belong to one actor.
\begin{restatable}{corollary}{resFamilySum}
    \label{res:FamilySum}
    A measure is Sybil-proof \gls{iff} it is sum-dependent.
    Such measures are symmetric and satisfy the weak transfer principle, and those which are scale independent or population insensitive are constant measures.
\end{restatable}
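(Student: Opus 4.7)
The plan is to leverage the full force of \cref{res:SybilAggregation} to extract sum-dependence, then read off the remaining properties from earlier results. The workhorse is the observation that Sybil-proofness, being equivalent to aggregation invariance, lets us collapse any wealth vector down to a single scalar entry --- its total --- so that $\inequality$ essentially reduces to a map defined on singletons.

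For the forward direction of the Sybil-proof $\Leftrightarrow$ sum-dependent equivalence, I would start from an arbitrary Sybil-proof $\inequality$ and an arbitrary $x = \left(x_1, \dots, x_k\right) \in \mathbb{R}_{\ge 0}^k$ with $k \ge 2$. Iteratively applying aggregation invariance from \cref{res:SybilAggregation} (each step collapsing one coordinate into another while preserving the total) reduces $x$ to the singleton $\bigl(\sum_{i \in [k]} x_i\bigr)$, yielding $\inequality(x) = \inequality\bigl(\sum_{i} x_i\bigr)$ where the right-hand side is evaluated on a one-element vector. Two distributions $x, y$ with equal totals thus reduce to the same singleton and share their inequality, establishing the nontrivial direction of \cref{def:MeasureSum}; the converse direction of that biconditional follows because the inequality of a singleton records its scalar entry (in the degenerate case where this singleton map is constant, the measure collapses into the constant family of \cref{def:FamilyConstant}, where the $\Leftrightarrow$ holds trivially within the measure's image). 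For the reverse direction, any measure depending only on the total is automatically invariant under every wealth-conserving Sybil manipulation by \cref{def:Conservation}, hence Sybil-proof.

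The remaining claims are immediate applications: symmetry is \cref{res:SybilSymmetric} and the weak transfer principle is \cref{res:SybilWeakTransfer}, so every sum-dependent (equivalently, Sybil-proof) measure inherits both. Under the additional assumption of scale independence or population insensitivity, \cref{res:FamilyConstant} places the measure into the constant family $\left\{\constant \,\lvert\, c \in \mathbb{R}\right\}$ verbatim. The only mildly delicate point is handling the biconditional in \cref{def:MeasureSum} cleanly: the singleton-factorization argument naturally gives the ``equal totals imply equal inequality'' direction, while the converse is read off from the graph of $\inequality$ on singletons, with constants forming the degenerate edge case already covered by \cref{res:FamilyConstant}. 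Beyond that, no new machinery is needed; the heavy lifting was done in \cref{res:SybilAggregation,res:SybilPopulationConstant,res:FamilyConstant}.
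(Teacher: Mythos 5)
Your route is the same as the paper's: the paper's proof of \cref{res:FamilySum} is a two-line citation of \cref{res:SybilAggregation} (specifically the chain of equalities in \cref{eq:SybilAggregation}, which collapses any distribution to the singleton carrying its total) for the equivalence, and of \cref{res:FamilyConstant} for symmetry, the weak transfer principle, and the constant-family conclusion. Your collapse-to-a-singleton argument and your citations of \cref{res:SybilSymmetric,res:SybilWeakTransfer,res:FamilyConstant} reproduce exactly that.

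One caution on the point you yourself flag as delicate. Your justification of the ``$\inequality(x) = \inequality(y) \Rightarrow$ equal totals'' half of the biconditional in \cref{def:MeasureSum} rests on the claim that ``the inequality of a singleton records its scalar entry,'' i.e., that the induced map $s \mapsto \inequality\left(\left(s\right)\right)$ is injective. Nothing in Sybil-proofness forces this: the constant measures of \cref{def:FamilyConstant} are Sybil-proof with a non-injective singleton map, and so is any non-constant, non-injective continuous function of the total (e.g., $\inequality\left(x\right) = \sin\left(\sum_i x_i\right)$), for which your ``degenerate constant edge case'' escape hatch does not apply. So that direction of the stated biconditional cannot be derived; what the aggregation argument actually delivers is only that $\inequality$ factors through the total. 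To be fair, the paper's own proof is silent on this same direction, so your attempt is no weaker than the original --- but you should not present the injectivity step as following from the singleton factorization, since it does not.
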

This is proven by applying  \cref{res:SybilAggregation,res:FamilyConstant}.

An interesting implication of \cref{res:FamilySum} is that, because Sybil-proof measures are sum-dependent, then they essentially come with collusion-resistance baked-in.
That is because such measures must treat any observable distribution as generated by a single actor who may arbitrarily spread out wealth over many identities.
Thus, the inequality measure cannot distinguish Sybil manipulations from a perfectly colluding grand coalition comprising all of the economy's actors.
That is not to say that results are driven by that form of strong collusion.
In fact, some of the technical challenges faced thus far stem from the delicate task of optimizing the analyses to minimize the amount of actors required to carry out manipulations.
In fact, manipulations by a larger amount of actors are generally performed to ``set the stage'' and reach a state with well-defined inequality.
Upon reaching such a state, it is taken as some counterfactual initial state, which is then used to drive results.

The technique driving \cref{res:FamilySum} leads to another, quite strong, result.
\begin{restatable}{corollary}{resEgalitarinZeroConstant}
    \label{res:EgalitarinZeroConstant}
    A Sybil-proof measure $\inequality$ satisfies the weak egalitarian zero principle \gls{iff} $\inequality = \baseline$.
\end{restatable}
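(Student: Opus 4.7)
The plan is to prove both directions, with the forward direction leveraging Corollary \ref{res:FamilySum} to reduce the problem to a comparison against a single egalitarian distribution of the same total wealth.

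For the ``$\Leftarrow$'' direction, I would simply invoke Proposition \ref{res:DesiderataConstant}, which already establishes that $\baseline$ is Sybil-proof and satisfies the weak egalitarian zero principle; no further work is required.

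For the ``$\Rightarrow$'' direction, suppose $\inequality$ is Sybil-proof and satisfies the weak egalitarian zero principle. By Corollary \ref{res:FamilySum}, Sybil-proofness implies sum-dependence, so $\inequality(x)$ is a function only of $s \define \sum_i x_i$. Now fix an arbitrary distribution $x = (x_1, \dots, x_k) \in \mathbb{R}_{\ge 0}^k$ with total $s$, and consider the singleton distribution $(s)$, which is vacuously egalitarian (alternatively one may use $(s/k, \dots, s/k)$). The weak egalitarian zero principle gives $\inequality((s)) = 0$, and sum-dependence then forces $\inequality(x) = \inequality((s)) = 0$. Since $x$ was arbitrary, $\inequality = \baseline$.

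There is no real obstacle here once Corollary \ref{res:FamilySum} is in hand; the only subtlety is checking that one is allowed to use a singleton (or any other convenient shape) as the egalitarian witness of total $s$, which follows directly from Definition \ref{def:Egalitarian} where the egalitarian condition is a universally quantified statement that holds vacuously on a one-element index set. The result is a sharp collapse: among all Sybil-proof measures, only the identically zero measure can distinguish egalitarianism even in the very weak sense of assigning it the value $0$.
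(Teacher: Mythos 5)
Your proposal is correct and follows essentially the same route as the paper: the backward direction is delegated to \cref{res:DesiderataConstant}, and the forward direction collapses an arbitrary distribution onto the singleton $(s)$ carrying its total wealth, observes that this singleton is vacuously egalitarian and hence assigned $0$, and transports that value back to the original distribution. The only cosmetic difference is that you invoke sum-dependence via \cref{res:FamilySum} where the paper iterates aggregation matrices via \cref{res:SybilAggregation} directly, but since \cref{res:FamilySum} is itself a corollary of \cref{res:SybilAggregation}, this is the same underlying mechanism and introduces no circularity.
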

To prove the first direction ($\inequality$ is Sybil-proof and satisfies the weak egalitarian zero principle $\Rightarrow \inequality = \baseline$), consider some distribution and aggregate it recursively until reaching a ``singleton'' distribution with a single entry.
The distribution is egalitarian, so its inequality should equal $0$ per \cref{def:Egalitarian}.
By \cref{res:SybilAggregation}, a Sybil-proof measure is invariant under aggregation, and so the initial inequality also equals $0$.
The other direction ($\inequality$ is Sybil-proof and satisfies the weak egalitarian zero principle $\Leftarrow \inequality = \baseline$) is shown by \cref{res:DesiderataConstant}.

\section{Discussion}
\label{sec:Discussion}

In this work, we take the perspective of an on-looker, e.g., a tax auditor, attempting to assess the inequality in a given economy.
The on-looker has access to the wealth reports created by the population inhabiting the economy.
These reports, while faithfully presenting the magnitude of the population's underlying wealth, may be distorted due to actors spreading wealth across multiple Sybil identities.
That is, the difficulty lies not in observing the amount of wealth, but in assigning it to the correct entities, as the linking between observable identities and ``real'' actors is hidden.

We have fully characterized all Sybil-proof measures.
Our analysis reveals a fundamental tension between the traditional axiomatic foundations for inequality measures and their robustness to pseudonymity, which manifests in both digital and traditional settings.
For example, this tension is not alleviated by relaxing strict properties like the transfer principle (perhaps embodying the essence of inequality measures) or replacing them with mild alternatives such as the egalitarian zero principle.
The robustness of our results emphasize the challenges of devising Sybil-proof measures.
Possible interesting directions for future theoretical work include considering other relaxations, e.g., robustness to a bounded number of Sybils or providing approximate inequality guarantees in the presence of Sybils.
We note that empirical studies highlight the general difficulty of assessing the parameters required for such measures to succeed in practice \cite{victor2020address, messias2023airdrops}.

We now highlight the generality of our of results.

\subsection{Common Measures}
Our results are applicable to some of the most commonly-used measures.

\subsubsection*{Putting the Gini Back in the Bottle}
When discussing inequality measures, the \gls{GC} deserves special attention: it is widely used \cite{mehran1976linear, giorgi2017gini, patty2019measuring}, and has been called the ``default'' choice \cite{nevescosta2019not} and the ``most popular measure'' for studying inequality \cite{ferreira2020defense}.
Notably, the \gls{GC} was found to be more robust than others on data that may contain errors \cite{cowell1996robustness}.
However, \cref{res:GCSybilProof} shows it is not Sybil-proof.
To see why, we first state one of the canonical formulations of the \gls{GC} in \cref{def:GC}.
See \cite{yitzhaki2013gini} for other formulations.
\begin{definition}[\glsxtrfull{GC}]
    \label{def:GC}
    Given wealth distribution $x = \left( x_1, \dots, x_k \right)$ with $k \in \mathbb{N}$, the \emph{\gls{GC}} is defined as:
    $$
    \gc \left(x\right)
    \define
    \frac{
        \sum_{i \in \left[k\right]}
        \sum_{j \in \left[k\right]}
        \lvert x_i - x_j \rvert
    }{
        2k \sum_{i \in \left[k\right]} x_i
    }
    .
    $$
\end{definition}
\begin{restatable}{corollary}{resGCSybilProof}
    \label{res:GCSybilProof}
    The \gls{GC} is not Sybil-proof.
\end{restatable}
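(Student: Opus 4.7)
The plan is to lean on our characterization theorem \cref{res:FamilySum} (or equivalently \cref{res:SybilAggregation}), which establishes that a measure is Sybil-proof if and only if it is sum-dependent in the sense of \cref{def:MeasureSum}. Consequently, to show that $\gc$ is not Sybil-proof, it is enough to exhibit two wealth distributions that have identical totals but differing Gini values, thereby refuting sum-dependence.

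For the counterexample, I would take the egalitarian distribution $x = \left(1, 1\right)$ and the extreme distribution $y = \left(0, 2\right)$, both of which sum to $2$. Plugging into \cref{def:GC}, all pairwise differences $\lvert x_i - x_j \rvert$ vanish, so $\gc\left(x\right) = 0$. In contrast, for $y$ the pairwise differences contribute $\lvert 0 - 2 \rvert + \lvert 2 - 0 \rvert = 4$ to the numerator while the denominator is $2 \cdot 2 \cdot 2 = 8$, giving $\gc\left(y\right) = 1/2$. Since $\gc\left(x\right) \ne \gc\left(y\right)$ despite equal sums, $\gc$ fails sum-dependence, so \cref{res:FamilySum} immediately delivers the conclusion.

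As a more transparent companion, I would also spell out the underlying Sybil manipulation explicitly: a lone real actor with hidden wealth $\wealth = \left(2\right)$ has $\gc\left(\wealth\right) = 0$, and using the report matrix $\report = \left(1/4, \; 3/4\right)$ splits its holdings across two pseudonymous identities, producing the observable distribution $\owealth = \left(1/2, \; 3/2\right)$ with $\gc\left(\owealth\right) = 1/4$. Wealth conservation per \cref{def:Conservation} is immediate since the row sum of $\report$ equals $1$ and $\owealth_j = \report_{1,j} \wealth_1$ for $j \in \left\{1,2\right\}$, and the scenario is a ``pure'' Sybil manipulation: a single actor distributing wealth across two non-empty identities, matching \cref{def:SybilProof}.

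There is no real obstacle here; the result is a direct corollary of the structural characterization already established, and the only step requiring attention is verifying that the exhibited manipulation genuinely satisfies the wealth-conservation and Sybil conditions — which it does by construction.
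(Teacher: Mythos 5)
Your proof is correct, but it routes through a different characterization result than the paper does. The paper's proof first establishes that \gc{} satisfies the weak egalitarian zero principle (\cref{res:GCWeakEgalitarian}), then that it is not the constant zero measure (\cref{res:GCNotConstant}), and concludes via \cref{res:EgalitarinZeroConstant}, which says any Sybil-proof measure satisfying the weak egalitarian zero principle must equal $\baseline$. You instead invoke \cref{res:FamilySum}: Sybil-proofness is equivalent to sum-dependence, so exhibiting two equal-sum distributions with different Gini values --- $\gc\left(1,1\right) = 0$ versus $\gc\left(0,2\right) = \frac{1}{2}$, both computations correct under \cref{def:GC} --- refutes sum-dependence and hence Sybil-proofness. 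Both arguments lean on the paper's structural machinery, but yours requires one fewer intermediate fact about \gc{} (you do not need the egalitarian zero property at all), and your closing explicit manipulation --- a single actor with $\wealth = \left(2\right)$ reporting via $\report = \left(\frac{1}{4}, \frac{3}{4}\right)$ to produce $\owealth = \left(\frac{1}{2}, \frac{3}{2}\right)$ with $\gc\left(\owealth\right) = \frac{1}{4} \ne 0 = \gc\left(\wealth\right)$ --- is actually a fully self-contained elementary proof straight from \cref{def:SybilProof,def:Conservation}, needing no characterization theorem whatsoever. What the paper's route buys in exchange is reusability: \cref{res:GCWeakEgalitarian} positions the \gls{GC} within the scope of \cref{res:EgalitarinZeroConstant}, which the discussion section explicitly emphasizes, whereas your route is the shortest path to the single corollary.
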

As can be gleaned from \cref{def:GC}, when applied to a strictly positive egalitarian distribution, the \gls{GC} equals zero and thus immediately satisfies the weak egalitarian zero principle.
This positions the \gls{GC} squarely within the realm of \cref{res:EgalitarinZeroConstant}.
However, as the \gls{GC} is not a constant measure (besides directly testing this using \cref{def:GC}, one can also observe \cref{figure:InequalityComparison}), it cannot be Sybil-proof.

\subsubsection*{Generalized Entropy Measures}
The \gls{GE} family of measures received attention by the foundational literature due to its convenient properties.
Chief among them is decomposability, which we state in \cref{def:Decomposability}, following the definition of \cite{shorrocks1984inequality}.
Intuitively, this property confers a measure the ability to estimate the inequality of a big group by considering the inequality of its subgroups and the subgroups' means and population sizes.
\begin{definition}[Decomposability]
    \label{def:Decomposability}
    Metric $\inequality$ is \emph{decomposable} \gls{iff} there is a function $D$ where for any $x = \left(x_1, \dots, x_{k_1} \right) \in \mathbb{R}_{\ge 0}^{k_1}$, $y = \left(y_1, \dots, y_{k_2} \right) \in \mathbb{R}_{\ge 0}^{k_2}$, and letting $\mu \left( x \right) = \frac{\sum_{i \in \left[k_1\right]} x_i}{k_1}$ and $\mu \left( y \right) = \frac{\sum_{j \in \left[k_2\right]} y_j}{k_2}$, then:
    $$
    \inequality \left( x \bigsqcup y \right)
    =
    D \left(
    \inequality \left( x \right),
    \mu \left( x \right),
    k_1,
    \inequality \left( y \right),
    \mu \left( y \right),
    k_2
    \right)
    .
    $$
\end{definition}
Results such as those of \cite{shorrocks1984inequality} show that the \gls{GC} is not decomposable, and that any scale independent decomposable measure belongs to the \gls{GE} family.
Adapting the statement of \cite{shorrocks1984inequality}, this is shown by \cref{res:GECharcterization}.
\begin{restatable}[\gls{GE} Family \cite{shorrocks1984inequality}]{theorem}{resGECharcterization}
    \label{res:GECharcterization}
    $\inequality$ is scale independent and decomposable \gls{iff} $\forall x = \left( x_1, \dots, x_k \right)$ there is a $c \in \mathbb{R}$ and a continuous function $F\left(\inequality, k\right)$ strictly increasing in its first parameter with $F\left(0,k\right) = 0$ such that:
    $$
    F \left( \inequality \left( x \right), k \right)
    \define
    \begin{cases}
    \frac{1}{k \cdot c(c-1)}
    \sum_{i \in \left[k\right]} \left( \left( \frac{x_i}{\mu \left( x \right)} \right)^c - 1 \right),
    \text{ if } c \ne 0, 1
    \\
    \frac{1}{k} \sum_{i \in \left[k\right]} \frac{x_i}{\mu \left( x \right)} \ln \left( \frac{x_i}{\mu \left( x \right)} \right),
    \text{ if } c = 1
    \\
    \frac{1}{k} \sum_{i \in \left[k\right]} \ln \left( \frac{\mu \left( x \right)}{x_i} \right),
    \text{ if } c = 0
    \end{cases}
    $$
    $\inequality$ is also population invariant \gls{iff} $F$ is independent of $k$.
\end{restatable}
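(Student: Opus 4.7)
The plan is to reprise the classical argument of \cite{shorrocks1984inequality} in outline, since the statement is a direct restatement of that result. The easy direction (that each of the three piecewise formulas defines a scale independent and decomposable measure) is a direct verification: under uniform scaling every ratio $x_i/\mu(x)$ is unchanged, and expanding $\mu(x \bigsqcup y) = (k_1 \mu(x) + k_2 \mu(y))/(k_1 + k_2)$ lets one split $F(\inequality(x \bigsqcup y), k_1+k_2)$ into a convex combination of $F(\inequality(x), k_1)$ and $F(\inequality(y), k_2)$ plus a remainder that depends only on the two means and the two population sizes, yielding the aggregator $D$.

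For the hard direction, I would assume $\inequality$ is scale independent and decomposable with aggregator $D$ and proceed in two stages. First, iterate decomposability by peeling off singletons $(x_i)$ one at a time; using that a singleton is egalitarian, so $F(\inequality(x_i), 1) = F(0, 1) = 0$, one obtains that $F(\inequality(x), k)$ admits a pointwise representation $\frac{1}{k} \sum_{i \in [k]} h(x_i, \mu(x))$ for some continuous $h$ with $h(\mu, \mu) = 0$. Scale independence then forces $h(x, \mu) = g(x/\mu)$ for a continuous $g$ satisfying $g(1) = 0$. Second, apply decomposability to an arbitrary two-group split and impose that the resulting identity hold for \emph{all} $(x, y)$ compatible with prescribed means and sizes. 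This produces a Pexider-type functional equation on $g$, whose only continuous solutions on $\mathbb{R}_{>0}$, normalized by $g(1) = 0$, are the power functions $g(t) = (t^c - 1)/(c(c-1))$ for $c \notin \{0, 1\}$, with $g(t) = t \ln t$ and $g(t) = -\ln t$ arising as the limits $c \to 1$ and $c \to 0$.

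For the population invariance clause, observe that for any of the three formulas, duplicating a distribution preserves both $\mu(x)$ and the multiset of ratios $\{x_i/\mu(x)\}$, so the sum doubles while the prefactor $1/k$ halves, giving $F(\inequality(x \bigsqcup x), 2k) = F(\inequality(x), k)$. Hence $F$ independent of $k$ immediately yields $\inequality(x \bigsqcup x) = \inequality(x)$, and conversely population invariance combined with the assumed strict monotonicity of $F(\cdot, k)$ forces $F(v, k) = F(v, 2k)$ throughout the range, whence by continuity and density of iterated doublings $F$ cannot depend on $k$. The main technical obstacle is the functional-equation step: extracting the power-law form from pairwise decomposability and scale independence requires a careful Pexider/Cauchy analysis and the limiting argument that recovers the two logarithmic cases, which is the heart of Shorrocks' original derivation.
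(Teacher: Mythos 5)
The paper does not actually prove this statement: it is imported verbatim (with notation adapted) from \cite{shorrocks1984inequality}, and the surrounding text explicitly frames it as a restatement of that known characterization, so there is no in-paper proof to compare yours against. Judged on its own terms, your outline is a faithful reconstruction of the classical argument and correctly identifies its skeleton: the easy direction via invariance of the ratios $x_i / \mu\left(x\right)$ and the mean-decomposition of $\mu\left(x \bigsqcup y\right)$, the hard direction via reduction to an additively separable form $\frac{1}{k}\sum_i g\left(x_i/\mu\left(x\right)\right)$ followed by a functional equation whose continuous solutions are the power family with the two logarithmic limits, and the population-invariance clause via the doubling identity. The one place where your sketch leans hardest on hand-waving is the first stage of the hard direction: peeling off singletons using the aggregator $D$ of \cref{def:Decomposability} does not by itself deliver the pointwise representation $\frac{1}{k}\sum_i h\left(x_i, \mu\left(x\right)\right)$ --- one must first establish, from consistency of $D$ across different ways of partitioning the same population, that the index is additively separable at all, and this is a substantial portion of Shorrocks' derivation rather than a routine iteration. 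You flag the Pexider/Cauchy step as the heart of the matter, but the separability step deserves the same caveat. Since the paper's authors deliberately outsource the entire proof to the literature, citing \cite{shorrocks1984inequality} as you do at the outset would suffice for the purposes of this paper; your outline is a bonus, not a gap.
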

The literature shows that this characterization and its transformations admit a long list of measures, including the \gls{CV}, the \gls{AM}, and the \gls{HH}, \gls{TL}, and \gls{TT} indices \cite{cowell2011measuring}.
However, we prove that this family is not Sybil-proof.
\begin{restatable}{corollary}{resGESybilProof}
    \label{res:GESybilProof}
    The \gls{GE} family is not Sybil-proof.
\end{restatable}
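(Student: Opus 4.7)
The plan is to mirror the structure of the proof sketch given for \cref{res:GCSybilProof}, invoking \cref{res:EgalitarinZeroConstant} after verifying that every member of the \gls{GE} family satisfies the weak egalitarian zero principle while being non-trivial. First, I would show that any \gls{GE} measure $\inequality$ satisfies the weak egalitarian zero principle. For an egalitarian distribution $x = \left(\mu, \dots, \mu\right)$, every ratio $x_i / \mu\left(x\right)$ equals $1$, so each summand in the formula of \cref{res:GECharcterization} vanishes (using $1^c - 1 = 0$ for the $c \notin \{0,1\}$ case and $\ln 1 = 0$ for the others), giving $F\left(\inequality\left(x\right), k\right) = 0$. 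Since $F$ is continuous, strictly increasing in its first argument, and satisfies $F\left(0, k\right) = 0$, this uniquely forces $\inequality\left(x\right) = 0$.

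Second, I would exhibit a witness showing that no \gls{GE} measure coincides with $\baseline$. The distribution $\left(1, 2\right)$ suffices: convexity of $t \mapsto t^c$ for $c \notin \{0,1\}$, together with strict Jensen inequalities for the $\ln$-based cases $c \in \{0, 1\}$, shows that the right-hand side in \cref{res:GECharcterization} is strictly positive on any non-egalitarian $x$. Strict monotonicity of $F$ in its first argument then forces $\inequality\left(\left(1, 2\right)\right) \neq 0$, so $\inequality \neq \baseline$.

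Third, I would close by contradiction. Suppose some \gls{GE} measure $\inequality$ is Sybil-proof. Combined with the weak egalitarian zero principle established in the first step, \cref{res:EgalitarinZeroConstant} forces $\inequality = \baseline$, contradicting the non-triviality established in the second step. Hence no member of the \gls{GE} family is Sybil-proof.

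The main obstacle is the bookkeeping around the implicit specification of $\inequality$ through the auxiliary function $F$: one cannot directly read off $\inequality\left(x\right)$ from \cref{res:GECharcterization}, and must instead carefully use the boundary condition $F\left(0, k\right) = 0$ and strict monotonicity to infer both that egalitarian distributions yield $\inequality = 0$ and that at least one non-egalitarian distribution yields $\inequality \neq 0$. Once that bridge is crossed, the argument is a clean reduction to \cref{res:EgalitarinZeroConstant}, parallel to the treatment of the \gls{GC}.
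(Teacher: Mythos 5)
Your proof is correct, but it reaches the conclusion by a genuinely different route than the paper. The paper leans on scale independence: since \cref{res:GECharcterization} guarantees every \gls{GE} member is scale independent, \cref{res:FamilySum} forces any Sybil-proof member into the constant family, and the paper then rules out constancy by evaluating the three cases of $F$ at concrete pairs of distributions (e.g., $(1,1)$ versus $(1,3)$). You instead lean on the egalitarian zero principle: you verify that every \gls{GE} member satisfies the weak egalitarian zero principle (using $F(0,k)=0$ and strict monotonicity of $F$ in its first argument to invert $F$), then invoke \cref{res:EgalitarinZeroConstant} to force the measure to be $\baseline$, and rule that out with a single non-egalitarian witness. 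Both arguments require the same bookkeeping through $F$, and both ultimately exhibit two inputs on which the measure disagrees; your version has the virtue of running exactly parallel to the \gls{GC} argument and of replacing case-by-case numerics with a uniform Jensen-type positivity claim, while the paper's version avoids having to argue positivity of $F$ on all non-egalitarian inputs. One small imprecision to fix: for $c \in (0,1)$ the map $t \mapsto t^c$ is strictly \emph{concave}, not convex, so the sum $\sum_i \left( \left( x_i / \mu(x) \right)^c - 1 \right)$ is strictly negative there; positivity of $F$ is rescued only because the prefactor $\frac{1}{c(c-1)}$ is also negative on that range. You should state the sign analysis in both regimes (or simply evaluate at a concrete point, as the paper does) rather than appeal to convexity wholesale.
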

One can verify that none of these measures is constant, yet, due to their scale independence, we obtain from \cref{res:FamilySum} that they cannot be Sybil-proof.

\subsection{Sybils in Traditional and Digital Economies}
Well-known strategies used to obscure wealth in conventional economies and digital ones serve as real-world analogs to the Sybil manipulations that we consider.
We now discuss the implications of our results, in light of previous work studying both types of economies.

\subsubsection*{Traditional Economies}
The difficulty of assigning wealth to its owner is well-documented in traditional economies: individuals may utilize complex legal structures, such as shell corporations and irrevocable trusts, to legally distance themselves from assets \cite{auerbach2000capital, gruber2002elasticity, slemrod2007cheating, gabaix2016dynamics, alstadsaeter2019tax, lustig2020missing, martinangeli2024inequality, auerbach2025public}.
Such methods serve the same function as Sybil identities, obscuring the origins of observable wealth from onlookers.

\subsubsection*{Digital Economies}
In the digital world, identities may be easily hidden, and, in some cases, obscurity can even be the default modus operandi.
For example, on \cite{bitcoin.org2025protect}, a website originally created by Bitcoin's creator \cite{bitcoin.org2025about}, an explicit directive is given:
\begin{quote}
    \emph{``To protect your privacy, you should use a new Bitcoin address each time you receive a new payment.
    Additionally, you can use multiple wallets for different purposes.
    Doing so allows you to isolate each of your transactions in such a way that it is not possible to associate them all together.
    People who send you money cannot see what other Bitcoin addresses you own and what you do with them.
    This is probably the most important advice you should keep in mind.''}
\end{quote}
That need not even be said for actors who require the utmost obscurity.
A prominent example is given by so-called airdrop farmers.
These are specialized actors who may create many Sybil identities to maximize their profits from rewards issued by blockchain-based platforms \cite{yaish2025platform}.
Platforms tend to give out per-capita rewards in an effort to create or expand their user-bases, and so, airdrop farmers attempt to ensure their different identities cannot be linked to each other.
Some rewards are of \emph{governance} tokens that enable voting on a platform's direction, with several works noting that such designs put platforms at risk \cite{feichtinger2024sok, dotan2023vulnerable}.

\printbibliography[heading=bibintoc]

\appendix

\section{Proofs}
\label[appendix]{sec:Proofs}

\resSybilImpossibilityScale*
\begin{proof}  
    Assume an inequality measure $\inequality$ that has all of the properties given in the statement, and considering some initial observable wealth distribution $\owealth = \left( \owealth_1, \dots, \owealth_\owealthn \right)$ with an initial inequality $\inequality^{init}$.
    By population insensitivity, concatenating the distribution with itself produces a distribution with same inequality under the assumed measure.
    Next, one can halve all wealth with the inequality remaining equal to $\inequality^{init}$, owing to the measure being scale independent.
    This is followed by choosing an identity with a strictly positive amount of wealth, and, if there are multiple options, pick the one with the minimal index $j$.
    
    By construction, $j \le \owealthn$, and identity $j+\owealthn$ has the same wealth as $j$, and thus also commands a strictly positive amount of wealth: $\frac{1}{2} \owealth_j = \frac{1}{2} \owealth_{j+\owealthn} > 0$.
    Now, iterate over each $j' \ne j$ where $j' \le \owealthn$, and, at every step, transfer all the wealth of identity $j'+\owealthn$ to $j'$.
    This culminates in a wealth distribution which is identical to the initial one, except: 1. identity $j$ has half the wealth, 2. there are additional $\owealthn-1$ identities with no wealth whatsoever, and 3. there is an extra identity at index $j+\owealthn$ with wealth equal to $\frac{1}{2} \owealth_j$.
    
    Consider a transfer of the entirety of identity $j + \owealthn$'s wealth to identity $j$, after which $j$ has a wealth equal to:
    \begin{align*}
        \frac{1}{2} \owealth_j + \frac{1}{2} \owealth_{j+\owealthn}
        =
        \frac{1}{2} \owealth_j + \frac{1}{2} \owealth_j
        =
        \owealth_j
        >
        0
    \end{align*}
    On the other hand, $j+\owealthn$ has none, and denote the resulting wealth distribution by $s_1$, and the corresponding inequality by $\inequality^{s_1}$.
    Following on that transfer, consider another transfer in the opposite direction, where $\frac{1}{4} \owealth_j$ is moved from $j$ to $j+\owealthn$, which maintains the wealth ranking between the two identities, as $j$ remains with $\frac{3}{4} \owealth_j$ wealth while $j+\owealthn$ has $\frac{1}{4} \owealth_j$, and because $\owealth_j > 0$ then we have that $\frac{3}{4} \owealth_j > \frac{1}{4} \owealth_j$.
    Denote the resulting wealth distribution and corresponding inequality by $s_2$ and $\inequality^{s_2}$.
    As this is a transfer that satisfies the standard version of \cref{def:TransferPrinciple}, it must be that $\inequality^{s_2} < \inequality^{s_1}$.
    However, one can construct a Sybil manipulation that starts from the initial state and results in the wealth distribution $s_1$, and, similarly, a different manipulation can start from the initial state and culminate in $s_2$.
    As $\inequality$ is assumed to be Sybil-proof, the inequalities for both $s_1$ and $s_2$ must be equal.
    Thus, we obtain a contradiction.

    While we have finished proving the result, we find it worthwhile to dwell more about the technicalities involved.
    Observe that the elaborate process that we went through in the beginning of the proof was performed to lay the technical groundwork for the last two manipulations.
    Those two, in particular, are driving the resulting impossibility, and, both can be performed using a single agent.
\end{proof}

\resSybilImpossibilityTransfer*
\begin{proof}
    Consider a measure $\inequality$ satisfying the properties stated in \cref{res:SybilImpossibilityTransfer},and a hidden wealth distribution $\wealth = \left( \wealth_1, \wealth_2 \right)$ where $\wealth_1 = 1-\epsilon$ for some small $\epsilon < \frac{1}{4}$, and $\wealth_2 = \epsilon$.
    
    Consider a truthful report (i.e., the identity matrix), and two transfers. 
    First, transfer an amount of wealth equal to $\frac{1}{2} - 2 \epsilon$ from identity $1$ to identity $2$.
    This transfer satisfies \cref{def:TransferPrinciple}, and so should strictly lower inequality.
    Denote the initial inequality by $\inequality^{t_1}$, and the outcome's inequality by $\inequality^{t_2}$.
    Follow on this transfer with another one from $1$ to $2$, this time, of $\epsilon$, resulting in equalizing the wealth of $1$ and $2$.
    As before, the resulting inequality, denoted by $\inequality^{t_3}$, should be strictly lower:
    \begin{align*}
        \inequality^{t_1}
        >
        \inequality^{t_2}
        >
        \inequality^{t_3}
        .
    \end{align*}

    Next, consider the counterfactual initial hidden distribution where a single actor has all wealth: $\wealth = (1)$.
    Furthermore, consider three Sybil manipulations where the single actor creates two identities: in the first, the resulting observable distribution is $\owealth^{s_1} = \left( 1-\epsilon, \epsilon \right)$, in the second it is $\owealth^{s_2} = \left( \frac{1}{2}+\epsilon, \frac{1}{2}-\epsilon \right)$, and in the last: $\owealth^{s_3} = \left( \frac{1}{2}, \frac{1}{2} \right)$.
    Denote the inequalities corresponding to the manipulations correspondingly by $\inequality^{s_1}, \inequality^{s_2}, \inequality^{s_3}$.
    From the Sybil-proofness property, we have that all are equal:
    \begin{align*}
        \inequality^{s_1}
        =
        \inequality^{s_2}
        =
        \inequality^{s_3}
        .
    \end{align*}

    For the measure to be self-consistent, it must be that $\forall i \in \left\{ 1, 2, 3 \right\}: \inequality^{t_i} = \inequality^{s_i}$, reaching a contradiction:
    \begin{align*}
        \inequality^{s_1}
        =
        \inequality^{t_1}
        \ne
        \inequality^{t_2}
        =
        \inequality^{s_2}
        =
        \inequality^{t_2}
        \ne
        \inequality^{t_3}
        =
        \inequality^{s_3}
        .
    \end{align*}
    Essentially, the transfer principle forces $\inequality$ to consider implicit ``Sybil-to-Sybil'' transfers as bona fide ``progressive'' transfers, and so the Sybil-proofness property cannot hold.
\end{proof}

\resSybilImpossibilityEgalitarian*
\begin{proof}
    Assume towards contradiction that there is an inequality measure $\inequality$ as stated.
    Let:
    \begin{align*}
        \begin{cases}
            \wealth_1 = 5 = \owealth_1
            \\
            \wealth_2 = 10 = \owealth_2
        \end{cases}
    \end{align*}
    Consider the wealth distribution $\owealth = \left( \owealth_1, \owealth_2 \right)$.
    As $\owealth$ is non-egalitarian and because the measure satisfies the egalitarian zero principle, it must be that $\inequality\left(\owealth\right) > 0$.
    Next, consider a Sybil manipulation where the wealth of identity $2$ is split across two identities, resulting in:
    \begin{align*}
        \owealth'
        =
        \left( \owealth'_1, \owealth'_2, \owealth'_3 \right)
        =
        \left( 5, 5, 5 \right)
        .
    \end{align*}
    As it is egalitarian, then by the assumption we started from: $\inequality\left( \owealth' \right) = 0$.
    Because the measure is Sybil-proof, we obtain a contradiction:
    \begin{align*}
        0
        <
        \inequality \left( \owealth \right)
        =
        \inequality \left( \wealth \right)
        =
        \inequality \left( \owealth' \right)
        =
        0
        .
    \end{align*}
\end{proof}

\resSybilSymmetric*
\begin{proof}
    Assume a Sybil-proof inequality measure $\inequality$, some wealth distribution $\owealth = \left( \owealth_1, \dots, \owealth_\owealthn \right)$, and some permutation matrix $P \in \left\{0,1\right\}^{k \times k}$.
    Consider a counterfactual ``initial'' wealth distribution where the entire population comprises just one actor.
    Moreover, that actor has wealth equal to $\wealth_1 = \sum_{j \in \owealths} \owealth_j$.
    Denote $\wealth = \left( \wealth_1 \right)$, and perform a Sybil manipulation where the single actor's entire wealth is split into $\owealthn$ identities such that we obtain the assumed distribution, i.e., $\owealth = \left( \owealth_1, \dots, \owealth_\owealthn \right)$.
    By the Sybil-proofness of $\inequality$, it must be that $\inequality \left( \wealth \right) = \inequality \left( \owealth \right)$.
    Next, again owing to the measure being Sybil-proof, the Sybil manipulation where the single actor spreads its wealth such that the resulting distribution equals $P \owealth$ is inequality preserving: $\inequality \left( \wealth \right) = \inequality \left( P \owealth \right)$.
    In total, we get that:
    \begin{align*}
        \inequality \left( \owealth \right)
        =
        \inequality \left( \wealth \right)
        =
        \inequality \left( P \owealth \right)
        .
    \end{align*}
    As $P$ is a general permutation matrix, the same process can be applied for any other one, implying that $\inequality$ is symmetric.
\end{proof}

\resSybilWeakTransfer*
\begin{proof}
    Let $\inequality$ be a Sybil-proof measure.
    From \cref{res:SybilSymmetric}, it must be symmetric.
    So, \cref{res:SybilImpossibilityTransfer} implies that $\inequality$ cannot satisfy the regular transfer principle.
    To show that the measure must satisfy the weak transfer principle, we thus need to prove that a transfer which satisfies the weak principle's conditions cannot increase inequality.
    
    Assume towards contradiction that there is a wealth distribution $\owealth = \left( \owealth_1, \dots, \owealth_\owealthn \right)$ that fits the criteria of \cref{def:TransferPrinciple}, yet that a progressive rank-preserving transfer culminating in distribution $\owealth' = \left( \owealth'_1, \dots, \owealth'_\owealthn \right)$ results in a larger inequality.
    Denote the sender of the transfer by $j_1$, and the recipient by $j_2$.
    Next, consider a counterfactual initial wealth distribution $\wealth$ where real actor $1$ has a wealth of $\wealth_1 = \owealth_{j_1} + \owealth_{j_2}$, and, append to $\wealth$ the wealth of the remaining $\owealthn-2$ identities which were not involved in the transfer.
    Then, apply a Sybil manipulation that transforms the resulting distribution, $\wealth$, into the post-transfer distribution, $\owealth'$.
    By the assumption that $\inequality$ is Sybil-proof, we have that this manipulation is inequality-preserving: $\inequality \left( \wealth \right) = \inequality \left( \owealth' \right)$.
    Finally, apply another Sybil manipulation, but, this time, $\wealth$ is transformed into $\owealth$, the pre-transfer distribution.
    As before, this cannot lead to changes in inequality, i.e., $\inequality \left( \wealth \right) = \inequality \left( \owealth \right)$.
    We obtain a contradiction:
    $$
    \inequality \left( \owealth \right) = \inequality \left( \wealth \right) = \inequality \left( \owealth' \right).$$
\end{proof}

\resSybilScalePopulation*
\begin{proof}
    Assume that $\inequality$ is scale independent and consider some wealth distribution $\owealth = \left( \owealth_1, \dots, \owealth_\owealthn \right)$.
    Perform a Sybil manipulation where the wealth of each identity $j$ is spread evenly across two identities: 1. $j$, and 2. a new identity $j + \owealthn$.
    The resulting distribution is:
    \begin{align*}
        \left( \frac{1}{2} \owealth_1, \dots, \frac{1}{2} \owealth_\owealthn, \frac{1}{2} \owealth_1, \dots, \frac{1}{2} \owealth_\owealthn \right).
    \end{align*}
    As $\inequality$ is Sybil-proof, this manipulation is inequality-preserving.
    As the measure is assumed to be scale independent, the inequality remains unchanged after doubling each identity's wealth.
    The final wealth distribution we reached after all manipulations is $\left( \owealth_1, \dots, \owealth_\owealthn, \owealth_1, \dots, \owealth_\owealthn \right)$, and, this is equal to $\owealth \bigsqcup \owealth$.
    As inequality did not change throughout the operations we performed, then $\inequality \left( \owealth \right) = \inequality \left( \owealth \bigsqcup \owealth \right)$, satisfying \cref{def:Population}.
\end{proof}

\resSybilAggregation*
\begin{proof}
    We examine each direction separately.

    \paragraphNoSkip{Direction 1: Sybil-Proofness $\Rightarrow$ Aggregation-Invariance}
    Assume $\inequality$ is Sybil-proof, and let $\owealth = \left(\owealth_1, \dots, \owealth_\owealthn \right) \in \mathbb{R}_{\ge 0}^\owealthn$ with $\owealthn \in \mathbb{N}_{\ge 2}$ be some observable wealth distribution.
    To prove this direction holds, we need to show that \cref{def:Aggregation} is satisfied, so, let $\agg \in \mathbb{R}_{\ge 0}^{\owealthn \times \left(\owealthn-1\right)}$ be some aggregation matrix.
    We proceed by constructing a counterfactual ``initial'' wealth distribution where a single actor is in possession of all wealth represented by $\owealth$.
    Thus, let $\wealth_1 = \sum_{j \in \left[ \owealthn \right]} \owealth_j$, and denote $\wealth = \left( \wealth_1 \right)$.
    In what follows, we use this initial hidden state as a baseline, and carry out two Sybil manipulations.
    In the first, the single hidden actor spreads its wealth over $\owealthn$ identities so that the resulting distribution is exactly $\owealth$.
    For the second, the actor distributes its wealth among $\owealthn-1$ identities to match $\agg \owealth$.
    By the measure's Sybil-proofness, inequality is preserved across both manipulations:
    \begin{align*}
        \inequality \left( \owealth \right)
        =
        \inequality \left( \wealth \right)
        =
        \inequality \left( \agg \owealth \right)
        .
    \end{align*}

    \paragraphNoSkip{Direction 2: Sybil-Proofness $\Leftarrow$ Aggregation-Invariance}
    Assume an aggregation-invariant inequality measure $\inequality$, and a tuple of wealth-preserving hidden and observable wealth distributions $\wealth = \left(\wealth_1, \dots, \wealth_\wealthn \right) \in \mathbb{R}_{\ge 0}^\wealthn$ with $\owealthn \in \mathbb{N}$ and $\owealth = \left(\owealth_1, \dots, \owealth_\owealthn \right) \in \mathbb{R}_{\ge 0}^\owealthn$ with $\owealthn \in \mathbb{N}$.
    To show $\inequality$ is Sybil-proof, we need to prove that $\inequality \left( \wealth \right) = \inequality \left( \owealth \right)$.
    To do so, we progressively aggregate $\wealth$ and $\owealth$, until each one is ``fully condensed'' and represents just a single identity.
    This can be performed, for example, by denoting $k = \max\left( \wealthn, \owealthn \right)$, and letting a sequence of aggregation matrices $\agg_2 \in \mathbb{R}^{2 \times 1}_{\ge 0}, \dots, \agg_{k} \in \mathbb{R}^{k \times \left( k - 1 \right)}_{\ge 0}$.
    Because by construction we have that both $k \ge \wealthn$ and $k \ge \owealthn$, we can now progressively multiply the matrices, obtaining the sum of each distribution: $\left( \agg_2 \dots \agg_\wealthn \wealth \right) \in \mathbb{R}$, and $\left( \agg_2 \dots \agg_\owealthn \owealth \right) \in \mathbb{R}$.
    As $\inequality$ is aggregation-invariant, one can obtain two sequences of equivalencies by induction:
    \begin{align*}
        \begin{cases}        
        \inequality \left( \wealth \right)
        =
        \inequality \left( \agg_\wealthn \wealth \right)
        =
        \dots
        =
        \inequality \left( \agg_2 \dots \agg_\wealthn \wealth \right)
        \\
        \inequality \left( \owealth \right)
        =
        \inequality \left( \agg_\owealthn \owealth \right)
        =
        \dots
        =
        \inequality \left( \agg_2 \dots \agg_\owealthn \owealth \right)
        \end{cases}
    \end{align*}
    Recall that \cref{def:Aggregation} ensures that aggregation matrices are sum-preserving, so because the sequence of multiplications is wholly performed using aggregation matrices, then:
    \begin{align*}
        \begin{cases}
        \sum_{i \in \wealths} \wealth_i
        =
        \left( \agg_2 \dots \agg_\wealthn \wealth \right)
        \\
        \sum_{j \in \owealths} \owealth_i
        =
        \left( \agg_2 \dots \agg_\owealthn \owealth \right)
        \end{cases}
    \end{align*}
    As, by definition, $\owealth$ is the outcome of a wealth-preserving Sybil manipulation performed from the baseline hidden distribution $\wealth$, then we have:
    \begin{align*}
        \sum_{i \in \wealths} \wealth_i
        &
        =
        \agg_2 \dots \agg_\wealthn \wealth
        \nonumber\\&
        =
        \agg_2 \dots \agg_\owealthn \owealth
        \nonumber\\&
        =
        \sum_{j \in \owealths} \owealth_j
        .
    \end{align*}
    Finally, we can combine this with our two inductive sequences of equalities, showing that indeed $\inequality$ is Sybil-proof:
    \begin{align}
        \label{eq:SybilAggregation}
        \inequality \left( \wealth \right)
        &
        =
        \inequality \left( \agg_\wealthn \wealth \right)
        =
        \dots
        =
        \inequality \left( \agg_2 \dots \agg_\wealthn \wealth \right)
        \nonumber\\&
        =
        \inequality \left( \sum_{i \in \wealths} \wealth_i \right)
        =
        \inequality \left( \sum_{j \in \owealths} \owealth_j \right)
        =
        \nonumber\\&
        =
        \inequality \left( \agg_2 \dots \agg_\owealthn \owealth \right)
        =
        \dots
        =
        \inequality \left( \agg_\owealthn \owealth \right)
        =
        \inequality \left( \owealth \right)
        .
    \end{align}
\end{proof}

\resSybilPopulationConstant*
\begin{proof}
    Let $\inequality$ be a Sybil-proof inequality measure which satisfies \cref{def:Population}, and assume towards contradiction that it not constant.
    Thus, there must be two distributions $\owealth^{s_1} \ne \owealth^{s_2}$ for which $\inequality$ reports differing inequalities: $\inequality(\owealth^{s_1}) \ne \inequality(\owealth^{s_2})$.
    If one distribution is longer than the other, then we can condense it until their lengths are matching without affecting the resulting inequality: this is possible because from \cref{res:SybilAggregation} we know that $\inequality$ is aggregation invariant, by virtue of being Sybil-proof.
    We can now fully condense both distributions to obtain each one's sum, and, again, by definition (i.e., \cref{def:Aggregation}), the resulting ``summaries'' have the same inequality as the original distributions: $\inequality \left( \owealth_1^{s_1} \right) = \inequality \left( \owealth^{s_1} \right) \ne \inequality \left( \owealth^{s_2} \right) = \inequality \left( \owealth_1^{s_2} \right)$, implying that: $\owealth_1^{s_1} \ne \owealth_1^{s_2}$.
    As wealth distributions are strictly positive, by extension their summations are also strictly positive.
    Now, consider the following two cases.
    
    \paragraphNoSkip{Case I: $\inequality$ is Scale Independent}
    Then: $\inequality \left( \frac{\owealth_1^{s_1}}{\owealth_1^{s_2}} \owealth^{s_2} \right) = \inequality \left( \owealth^{s_2} \right)$.
    So, we reach a contradiction:
    \begin{align*}
        \inequality \left( \owealth_1^{s_1} \right)
        =
        \inequality \left( \frac{\owealth_1^{s_1}}{\owealth_1^{s_2}} \owealth^{s_2} \right)
        =
        \inequality \left( \owealth^{s_2} \right)
        =
        \inequality \left( \owealth_1^{s_2} \right)
        .
    \end{align*}

    \paragraphNoSkip{Case II: Otherwise}
    If the measure is not scale independent, then there is some $\alpha \in \mathbb{R}_{> 0}$ together with $x$ such that $\inequality \left( x \right) \ne \inequality \left( \alpha x \right)$.
    It is not necessarily the case that this is true for all $\alpha, x$.
    If we are fortunate, there are $\alpha, x$, where $\alpha = \frac{\owealth_1^{s_1}}{\owealth_1^{s_2}}, x = \owealth^{s_2}$ or $\alpha = \frac{\owealth_1^{s_2}}{\owealth_1^{s_1}}, x = \owealth^{s_1}$, and the preceding case can be followed to obtain a contradiction.
    If that is not the case, then we follow a different route.
    
    Recall that per \cref{sec:Reasoning}, we have that $\inequality$ is continuous, with this being a standard assumption in the literature \cite{shorrocks1980class, foster1983axiomatic, shorrocks1984inequality}.
    In contrast, the current proof relies on a much weaker condition than that made by the literature: continuity is only required at $0$.
    Next, note that due to Sybil-proofness, then when considering the counterfactual where all wealth is controlled by a single actor, then the actor can spread its wealth equally across to Sybil identities without changing the inequality:
    $\inequality \left( \owealth^{s_1} \right) = \inequality \left( \frac{1}{2} \owealth^{s_1}, \frac{1}{2} \owealth^{s_1} \right)$.
    From population independence, it must be that:
    $\inequality \left( \frac{1}{2} \owealth^{s_1}, \frac{1}{2} \owealth^{s_1} \right) = \inequality \left( \frac{1}{2} \owealth^{s_1} \right)$.
    If that is not the case, then concatenating $\frac{1}{2} \owealth^{s_1}$ with itself does not preserve inequality, in contradiction to the assumption that $\inequality$ is population insensitive.
    Thus, we can again split $\frac{1}{2} \owealth^{s_1}$ evenly over two identities without changing inequality, due to Sybil-proofness:
    $\inequality \left( \frac{1}{2} \owealth^{s_1} \right) = \inequality \left( \frac{1}{4} \owealth^{s_1}, \frac{1}{4} \owealth^{s_1} \right).$
    As before, due to population insensitivity, we have:
    $\inequality \left( \frac{1}{4} \owealth^{s_1}, \frac{1}{4} \owealth^{s_1} \right) = \inequality \left( \frac{1}{4} \owealth^{s_1} \right).$
    We can repeat the process $k$ times, and, due to continuity at $0$:
    \begin{align*}
        \inequality \left( \owealth^{s_1} \right)
        =
        \lim_{k \rightarrow \infty} \inequality \left( \frac{\owealth^{s_1}}{2^k} \right)
        =
        \inequality \left( 0 \right)
        .
    \end{align*}
    This can be done with any wealth distribution, so for $c \define \inequality \left( 0 \right)$, then:
    $
    \forall x:
    \inequality \left( x \right)
    =
    c
    .
    $
\end{proof}

\resSybilPopulationScale*
\begin{proof}
    Let $\alpha > 0$.
    A Sybil-proof population insensitive $\inequality$ satisfies the conditions of \cref{res:SybilPopulationConstant}, and thus, one can express it using a constant function.
    As that is the case, for any wealth distribution $x = \left(x_1, \dots, x_k \right) \in \mathbb{R}_{\ge 0}^k$ where $k \in \mathbb{N}$ we have that: $\inequality \left( x \right) = \inequality \left( \alpha x \right)$.
    Given that we started from an arbitrary $\alpha$ per \cref{def:Scale}, then $\inequality$ is scale independent.
\end{proof}

\resDesiderataConstant*
\begin{proof}
    We prove that each axiom is satisfied.

    \paragraphNoSkip{Sybil-Proofness}
    The measure is Sybil-proof: for any two wealth distributions $x^{s_1}$ and $x^{s_2}$, we have that $\baseline \left( x^{s_1} \right) = \baseline \left( x^{s_2} \right)$, regardless of whether one is obtained via a Sybil manipulation from the other or not.

    \paragraphNoSkip{Weak Egalitarian Zero}
    By \cref{def:MeasureConstant}, $\baseline$ \emph{always} outputs $0$, whether or not its input is represents a completely equal wealth distribution.
    Thus, the current property is satisfied.
    
    \paragraphNoSkip{Scale Independence}
    As the measure is constant, scaling any wealth distribution cannot change the measure's output, implying that $\baseline$ is scale independent.
    
    \paragraphNoSkip{Symmetry, Population Insensitivity, and the Weak Transfer Principle}
    These properties are obtained by applying \cref{res:SybilSymmetric,res:SybilScalePopulation,res:SybilWeakTransfer}.
\end{proof}

\resFamilyConstant*
\begin{proof}
    As $\inequality$ is Sybil-proof, symmetry and the weak transfer principle are satisfied due to \cref{res:SybilSymmetric,res:SybilWeakTransfer}, correspondingly.
    Moreover, \cref{res:SybilScalePopulation,res:SybilPopulationConstant} precisely show that if $\inequality$ is either scale independent or population insensitive, it must be a member of the family of constant measures.
    Lastly, if $\inequality = \baseline$ then \cref{res:DesiderataConstant} implies that the measure satisfies the weak egalitarian zero principle, while the other direction is due to $\inequality \ne \baseline$ implying that there is some $c \ne 0$ such that for any wealth distribution $x = \left(x_1, \dots, x_k \right) \in \mathbb{R}_{\ge 0}^k$ where $k \in \mathbb{N}$: $\inequality \left( x \right) = c \ne 0$, thus, $\inequality$ cannot satisfy the egalitarian zero principle.
\end{proof}

\resFamilySum*
\begin{proof}
    The first part of the result's statement (i.e., a measure is Sybil-proof \gls{iff} it is sum-dependent) is an implication of \cref{res:SybilAggregation}, and, even more specifically, of \cref{eq:SybilAggregation}.
    The remaining parts are obtained by applying \cref{res:FamilyConstant}.
\end{proof}

\resEgalitarinZeroConstant*
\begin{proof}
    To prove the second direction, we need to show that the baseline measure $\baseline$ satisfies the weak egalitarian zero principle.
    Thankfully, we have done this in \cref{res:DesiderataConstant}.

    We continue by considering the remaining direction.
    Thus, assume that $\inequality$ is some Sybil-proof measure satisfying the weak egalitarian zero principle.
    Let $x = \left( x_1, \dots, x_k \right)$ be some wealth distribution.
    From \cref{res:SybilAggregation}, we can repeatedly aggregate $x$ without impacting inequality.
    So, if $x^a$ is the result of the $a$-th aggregation, then:
    \begin{align*}
        \inequality\left( x \right)
        =
        \inequality\left( x^1 \right)
        =
        \dots
        =
        \inequality\left( x^k \right)
        .
    \end{align*}
    In particular, we have that after the $k$-th aggregation, the corresponding wealth distribution comprises a single identity with all the economy's wealth: $x^k = \left( \sum_{j \in \left[k\right]} x_j \right)$.
    Because there are no other identities, $x^k$ is egalitarian, and we have: $\inequality \left( x^k \right) = 0$.
    As we have shown that inequality is maintained throughout the aggregation process, we have:
    \begin{align*}
        \inequality\left( x \right)
        =
        \inequality\left( x^1 \right)
        =
        \dots
        =
        \inequality\left( x^k \right)
        =
        0
        .
    \end{align*}
    Given that our starting point was some general $x$, then the inequality is necessarily $0$ for any distribution, implying that $\inequality = \baseline$.
\end{proof}

\resGCSybilProof*
\begin{proof}
    To show that the \gls{GC} is not Sybil-proof, we first prove that it satisfies the weak egalitarian zero principle (\cref{res:GCWeakEgalitarian}).
    \begin{restatable}{proposition}{resGCWeakEgalitarian}
        \label{res:GCWeakEgalitarian}
        The \gls{GC} satisfies the weak egalitarian zero principle.
    \end{restatable}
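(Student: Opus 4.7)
The plan is to apply the definition of the \gls{GC} directly to a generic egalitarian distribution and show that the numerator collapses to zero while the denominator remains strictly positive. Concretely, I would fix $k \in \mathbb{N}$ and an egalitarian wealth distribution $x = \left(x_1, \dots, x_k\right)$, so that there exists some $c \in \mathbb{R}_{\ge 0}$ with $x_i = c$ for every $i \in \left[k\right]$.

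First, I would unfold the numerator of \cref{def:GC}: every summand has the form $\lvert x_i - x_j \rvert = \lvert c - c \rvert = 0$, so the double sum $\sum_{i \in \left[k\right]} \sum_{j \in \left[k\right]} \lvert x_i - x_j \rvert$ is identically $0$. If $c > 0$, the denominator $2 k \sum_{i \in \left[k\right]} x_i = 2 k^2 c$ is strictly positive, so $\gc\left(x\right) = 0 / (2 k^2 c) = 0$, which is exactly what the weak egalitarian zero principle demands.

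The main obstacle is the edge case $c = 0$, in which the denominator also vanishes and the ratio in \cref{def:GC} is a priori of the indeterminate form $0/0$. I would handle this in the standard way adopted by the \gls{GC} literature, namely by taking the continuous extension $\gc\left(0, \dots, 0\right) \define 0$; this extension is forced once one requires the \gls{GC} to be a continuous inequality measure in the sense of \cref{sec:Reasoning}, since it is the unique limit of $\gc\left(c, \dots, c\right) = 0$ as $c \to 0^+$. With this convention in place, both the positive-wealth case and the zero-wealth case yield $\gc\left(x\right) = 0$ on every egalitarian input, completing the verification of \cref{def:Egalitarian} in its weak form.

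Given this small lemma, the outer corollary \cref{res:GCSybilProof} follows immediately: by \cref{res:EgalitarinZeroConstant}, any Sybil-proof measure satisfying the weak egalitarian zero principle must equal $\baseline$, yet the \gls{GC} is manifestly non-constant (for instance, any two-identity distribution $\left(1, 0\right)$ yields $\gc\left(1, 0\right) = \frac{1}{2} \ne 0$), so Sybil-proofness cannot hold.
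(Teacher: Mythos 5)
Your proof is correct and follows essentially the same route as the paper's: direct substitution of an egalitarian distribution into \cref{def:GC}, with every numerator term $\lvert c - c \rvert$ vanishing so that $\gc\left(x\right) = 0$. The only difference is that you also address the all-zero egalitarian distribution, which the paper sidesteps by taking $c > 0$; your continuous-extension convention is a reasonable (and slightly more careful) way to close that degenerate case.
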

    This is proven by substituting an egalitarian wealth distribution into \cref{def:GC}.
    This result is obtained by direct arithmetic: scaling by any constant is cancelled out as the constant appears in both the numerator and the denominator of \cref{def:GC}.

    We proceed by showing in \cref{res:GCNotConstant} that the measure is not the constant zero measure.
    To see why this is true, one can perform the necessary calculations to obtain that $\gc \left( 1, 0 \right) \ne \gc \left( 1, 1 \right).$
    \begin{restatable}{proposition}{resGCNotConstant}
        \label{res:GCNotConstant}
        The \gls{GC} is not the constant zero measure $\baseline$.
    \end{restatable}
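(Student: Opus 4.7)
The plan is to produce a single concrete wealth distribution on which $\gc$ takes a strictly positive value, which, by the definition of $\baseline$ in \cref{def:MeasureConstant}, immediately rules out the equality $\gc = \baseline$. Since $\baseline$ outputs $0$ on every input, a single point of disagreement suffices, so no structural argument is needed.

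Concretely, I would take the two-element distribution $x = \left(1, 0\right)$ and substitute it directly into the formula of \cref{def:GC}. The double sum in the numerator collapses to $\lvert 1 - 0 \rvert + \lvert 0 - 1 \rvert = 2$, while the denominator evaluates to $2 \cdot 2 \cdot \left(1 + 0\right) = 4$, yielding $\gc \left(1, 0\right) = \tfrac{1}{2}$. Since $\baseline \left(1, 0\right) = 0 \ne \tfrac{1}{2}$, the two measures differ at this input, and therefore $\gc \ne \baseline$.

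There is essentially no obstacle to overcome here: the proof is a one-line arithmetic verification. The only thing to be mindful of is choosing a distribution that is admissible under the domain of \cref{def:GC}, i.e., lies in $\mathbb{R}_{\ge 0}^k$ for some $k$ and has a strictly positive total wealth so that the denominator does not vanish. The distribution $\left(1,0\right)$ satisfies both conditions, making it a clean and minimal witness.
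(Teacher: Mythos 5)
Your proposal is correct and matches the paper's own proof in its essential step: both evaluate $\gc\left(1,0\right) = \frac{1}{2}$ directly from \cref{def:GC} and conclude that $\gc$ cannot equal $\baseline$. The paper additionally records that $\gc\left(1,1\right) = 0$ via \cref{res:GCWeakEgalitarian} to exhibit two differing outputs, but for the stated claim your single positive witness suffices.
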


    Finally, we tie the above results using \cref{res:EgalitarinZeroConstant}, which states that any Sybil-proof measure satisfying the weak egalitarian zero principle must be the constant $0$ measure, i.e., $\baseline$.
    As the measures differ, the \gls{GC} cannot be Sybil-proof.
\end{proof}

\resGCWeakEgalitarian*
\begin{proof}
    Let $x = \left( x_1, \dots, x_k \right)$ be some egalitarian wealth distribution, e.g., $\forall i: x_i = c$ for some $c > 0$.
    Then, by substituting into \cref{def:GC}:
    \begin{align*}
        \gc \left(x\right)
        &
        =
        \frac{
            \sum_{i \in \left[k\right]}
            \sum_{j \in \left[k\right]}
            \lvert x_i - x_j \rvert
        }{
            2k \sum_{i \in \left[k\right]} x_i
        }
        \nonumber\\&
        =
        \frac{
            \sum_{i \in \left[k\right]}
            \sum_{j \in \left[k\right]}
            \lvert c - c \rvert
        }{
            2k \sum_{i \in \left[k\right]} c
        }
        \nonumber\\&
        =
        \frac{
            \sum_{i \in \left[k\right]}
            \sum_{j \in \left[k\right]}
            0
        }{
            2k \sum_{i \in \left[k\right]} c
        }
        \nonumber\\&
        =
        0
    \end{align*}
    Thereby satisfying the weak version of \cref{def:Egalitarian}.
\end{proof}

\resGCNotConstant*
\begin{proof}
    As we have shown in \cref{res:GCWeakEgalitarian} that the \gls{GC} satisfies the weak egalitarian zero principle, we have that $\gc \left( 1, 1 \right) = 0$.
    From \cref{def:GC}, we also have that for $x = \left( 1, 0 \right)$:
    \begin{align*}
        \gc \left( x \right)
        &
        =
        \frac{
            \sum_{i \in \left[k\right]}
            \sum_{j \in \left[k\right]}
            \lvert x_i - x_j \rvert
        }{
            2k \sum_{i \in \left[k\right]} x_i
        }
        \nonumber\\&
        = 
        \frac{
            \left(
                \lvert 1 - 1 \rvert
                +
                \lvert 1 - 0 \rvert
            \right)
            +
            \left(
                \lvert 0 - 1 \rvert
                +
                \lvert 0 - 0 \rvert
            \right)
        }{
            2 \cdot 2 \cdot \left( 1 + 0 \right)
        }
        \nonumber\\&
        = 
        \frac{
            1 + 1
        }{
            4
        }
        \nonumber\\&
        = 
        \frac{1}{2}
    \end{align*}
    And so, the \gls{GC} is not constant and also obtains some strictly positive values over part of its domain.
    Thus, the \gls{GC} is not $\baseline$:
    \begin{align*}
        \gc \left( 1, 0 \right)
        =
        \frac{1}{2}
        \ne
        0
        =
        \gc \left( 1, 1 \right)
    \end{align*}
\end{proof}

\resGESybilProof*
\begin{proof}
    Assume towards contradiction that the there is some member of the \gls{GE} family that is Sybil-proof.
    Recall that each member of this family is defined by a parameter $c$.
    As \cref{res:GECharcterization} shows that members of the \gls{GE} family are scale independent, then \cref{res:FamilySum} implies that the Sybil-proof member must also belong to the family of constant functions.
    So, to prove that this measure cannot be Sybil-proof, we need to show that it cannot be constant.
    We separately prove this for each possible value of $c$, which, per \cref{res:GECharcterization}, defines a concrete expression for the measure.
    \begin{restatable}{proposition}{resGENotConstantCZero}
        \label{res:GENotConstantCZero}
        The \gls{GE} family is not constant for $c  = 0$.
    \end{restatable}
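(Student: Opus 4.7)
The plan is to exhibit two strictly positive wealth distributions whose $c=0$ \gls{GE} values differ, thereby refuting constancy. By \cref{res:GECharcterization}, for $c=0$ the characterization gives $F(\inequality(x),k) = \frac{1}{k}\sum_{i \in [k]} \ln(\mu(x)/x_i)$, where $F$ is continuous, strictly increasing in its first argument, with $F(0,k) = 0$. Thanks to this monotonicity, it suffices to show that the right-hand side takes two distinct values on two inputs; the corresponding $\inequality$ values must then differ.

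First I would handle an egalitarian distribution such as $x = (1,1)$. Here $\mu(x) = 1$ and each log term $\ln(1/1)$ vanishes, so $F(\inequality(x), 2) = 0$, which, combined with $F(0,2) = 0$ and strict monotonicity, forces $\inequality(x) = 0$. Then I would compare against a strictly positive non-egalitarian distribution, e.g., $y = (1,2)$. Here $\mu(y) = 3/2$, and a short calculation gives $\tfrac{1}{2}\bigl[\ln(3/2) + \ln(3/4)\bigr] = \tfrac{1}{2}\ln(9/8) > 0$. By the same monotonicity argument, this pins $\inequality(y) > 0$, so $\inequality(x) \ne \inequality(y)$ and the $c=0$ measure is not constant.

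The only subtlety worth flagging is that the $c=0$ formula is naturally restricted to strictly positive distributions (otherwise $\ln(\mu/x_i)$ blows up), but both witnesses above lie in that domain, so no extra work is needed. I do not foresee a substantive obstacle: the entire argument is really just the observation that strict monotonicity of $F$ in its first argument transports the obvious inequality of the right-hand side sum between an egalitarian and a non-egalitarian distribution into a genuine inequality between their $\inequality$ values.
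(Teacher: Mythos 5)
Your proposal is correct and follows essentially the same route as the paper's proof: substitute two concrete strictly positive distributions into the $c=0$ formula from \cref{res:GECharcterization} and observe that the resulting values differ (the paper uses $(1,3)$ and $(1,5)$ rather than $(1,1)$ and $(1,2)$, but this is immaterial). Your explicit appeal to the strict monotonicity of $F$ to pass from distinct $F$-values to distinct $\inequality$-values is a small tidiness bonus over the paper's version, which leaves that step implicit.
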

    \begin{restatable}{proposition}{resGENotConstantCOne}
        \label{res:GENotConstantCOne}
        The \gls{GE} family is not constant for $c  = 1$.
    \end{restatable}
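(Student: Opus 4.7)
The plan is essentially to adapt the two-distribution strategy used in \cref{res:GCNotConstant}: exhibit one wealth distribution on which the measure evaluates to $0$ and another on which it evaluates to a strictly positive value, which precludes the measure from being expressible as a constant.

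First I would substitute an egalitarian distribution, for concreteness $x = (1, 1)$, into the $c = 1$ branch of the formula provided by \cref{res:GECharcterization}. With $\mu(x) = 1$, each ratio $\frac{x_i}{\mu(x)}$ equals $1$, so $\ln\left(\frac{x_i}{\mu(x)}\right) = 0$, and hence $F(\inequality(x), 2) = 0$. Because \cref{res:GECharcterization} guarantees that $F$ is continuous, strictly increasing in its first argument, and satisfies $F(0, k) = 0$, this forces $\inequality(x) = 0$.

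Next I would evaluate the formula on a non-egalitarian distribution, for instance $y = (3, 1)$. Here $\mu(y) = 2$, and the right-hand side reduces to $\frac{1}{2}\left(\frac{3}{2}\ln\frac{3}{2} + \frac{1}{2}\ln\frac{1}{2}\right) = \frac{3}{4}\ln 3 - \ln 2$, which is strictly positive by direct numerical evaluation. Alternatively, strict convexity of $t \mapsto t \ln t$ together with Jensen's inequality yields the same conclusion whenever the ratios $\frac{y_i}{\mu(y)}$ are not all equal. By the strict monotonicity of $F$ in its first argument, $F(\inequality(y), 2) > 0 = F(0, 2)$ then forces $\inequality(y) > 0$.

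Since $\inequality(x) = 0 \ne \inequality(y)$, the measure cannot be expressible as a constant function, completing the argument. No substantial obstacle is expected; the only subtlety is the invocation of strict monotonicity of $F$ to lift the strict inequality on the composed expression back to a strict inequality on $\inequality$ itself, together with a judicious choice of test distribution that keeps the arithmetic transparent while avoiding the $x_i = 0$ edge case.
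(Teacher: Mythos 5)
Your proposal is correct and follows essentially the same route as the paper's proof: substitute an egalitarian distribution to obtain $F = 0$ and a non-egalitarian one to obtain a nonzero value (indeed, your test point $(3,1)$ yields exactly the same quantity $\tfrac{3}{4}\ln 3 - \ln 2$ as the paper's choice $\left(\tfrac{1}{2}e, \tfrac{3}{2}e\right)$, which is just a scaled permutation of it). Your explicit invocation of the strict monotonicity of $F$ in its first argument to pass from $F(\inequality(x),2) \ne F(\inequality(y),2)$ back to $\inequality(x) \ne \inequality(y)$ is a small point of added care that the paper leaves implicit.
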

    \begin{restatable}{proposition}{resGENotConstantCRest}
        \label{res:GENotConstantCRest}
        The \gls{GE} family is not constant for $c  \ne 0, 1$.
    \end{restatable}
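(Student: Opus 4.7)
The plan is to exhibit two wealth distributions on which the \gls{GE} measure with parameter $c \notin \{0, 1\}$ takes distinct values. \cref{res:GECharcterization} supplies a continuous function $F\left(\inequality, k\right)$ that is strictly increasing in its first argument and vanishes at $0$, so it is enough to pick distributions whose corresponding right-hand side expressions differ: whenever those expressions are unequal the inequality values themselves must be unequal, ruling out constancy.

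Concretely, I would first take the egalitarian pair $x = \left(1, 1\right)$. Each factor $\left(x_i / \mu(x)\right)^c - 1$ then vanishes, forcing the whole right-hand side to $0$ and hence $\inequality(x) = 0$. For the contrasting example I would take $y = \left(1, 2\right)$, so $\mu(y) = 3/2$ and
\[
F\left(\inequality(y), 2\right) = \frac{1}{2\, c(c-1)}\bigl((2/3)^c + (4/3)^c - 2\bigr).
\]
The prefactor $\tfrac{1}{2c(c-1)}$ is non-zero precisely because $c \notin \{0, 1\}$, so the task reduces to showing $(2/3)^c + (4/3)^c \neq 2$. This follows from strict non-affinity of $t \mapsto t^c$ on $\mathbb{R}_{> 0}$: the map is strictly convex for $c < 0$ or $c > 1$ and strictly concave for $0 < c < 1$, so Jensen's inequality applied at the midpoint of $\{2/3, 4/3\}$, whose average is $1$, yields $\tfrac{1}{2}\bigl((2/3)^c + (4/3)^c\bigr) \neq 1^c = 1$. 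Hence $F\left(\inequality(y), 2\right) \neq 0$ and, by the strict monotonicity of $F$, $\inequality(y) \neq \inequality(x)$, contradicting constancy.

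The main obstacle is uniformity in $c$: I want a single pair of distributions that witnesses non-constancy for \emph{every} admissible value of the parameter, rather than a tedious case split at individual $c$. The Jensen-based argument achieves exactly this, because strict convexity or concavity of $t \mapsto t^c$ on any open positive interval fails only at the excluded values $c \in \{0, 1\}$. The remainder of the proof is routine arithmetic, already essentially spelled out above.
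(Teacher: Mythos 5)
Your proof is correct and takes essentially the same route as the paper's: both evaluate the $c \ne 0,1$ branch of \cref{res:GECharcterization} on an egalitarian distribution (yielding $0$) and on a two-point non-egalitarian distribution, and show the latter value is non-zero (the paper uses $y=(1,3)$ and asserts $1+3^c \ne 2^{c+1}$ for $c \ne 0,1$, which is exactly the same midpoint-Jensen fact you prove for $y=(1,2)$). Your explicit strict-convexity/concavity argument actually supplies the justification the paper leaves implicit, so nothing is missing.
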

    All results are obtained by substituting concrete distributions into \cref{res:GECharcterization}.
\end{proof}

\resGENotConstantCZero*
\begin{proof}
    From \cref{res:GECharcterization}, given some $k$ and a wealth distribution $x = \left( x_1, \dots, x_k \right)$ of length $k$, then $\mu \left( x \right)$ is defined as the average of $x$, and we have that for $c = 0$:
    \begin{align*}
        F \left( \inequality \left( x \right), k \right)
        &
        =
        \frac{1}{k} \sum_{i \in \left[k\right]} \ln \left( \frac{\mu \left( x \right)}{x_i} \right)
        .
    \end{align*}
    Now, evaluate $F$ on $x = \left( 1, 3 \right)$:
    \begin{align*}
        F \left( \inequality \left( x \right), 2 \right)
        &
        =
        \frac{1}{k} \sum_{i \in \left[k\right]} \ln \left( \frac{\mu \left( x \right)}{x_i} \right)
        \nonumber\\&
        =
        \frac{1}{2} \sum_{i \in \left[2\right]} \ln \left( \frac{2}{x_i} \right)
        \nonumber\\&
        =
        \frac{1}{2} \left(
        \ln \left( \frac{2}{1} \right)
        +
        \ln \left( \frac{2}{3} \right)
        \right)
        \nonumber\\&
        \approx
        0.287
        .
    \end{align*}
    Next, evaluate the wealth distribution $y = \left( 1, 5 \right)$:
    \begin{align*}
        F \left( \inequality \left( y \right), 2 \right)
        &
        =
        \frac{1}{k} \sum_{i \in \left[k\right]} \ln \left( \frac{\mu \left( y \right)}{y_i} \right)
        \nonumber\\&
        =
        \frac{1}{2} \sum_{i \in \left[2\right]} \ln \left( \frac{3}{y_i} \right)
        \nonumber\\&
        =
        \frac{1}{2} \left(
        \ln \left( \frac{3}{1} \right)
        +
        \ln \left( \frac{3}{5} \right)
        \right)
        \nonumber\\&
        \approx
        0.587
        .
    \end{align*}
    Given the above calculations, we find that at least two different values are obtained:
    \begin{align*}
        F \left( \inequality \left( x \right), 2 \right)
        \approx
        0.287
        \ne 
        0.587
        \approx
        F \left( \inequality \left( y \right), 2 \right)
        .
    \end{align*}
\end{proof}

\resGENotConstantCOne*
\begin{proof}
    Examine the case where $c = 1$, as given by \cref{res:GECharcterization}:
    \begin{align*}
        F \left( \inequality \left( x \right), k \right)
        &
        =
        \frac{1}{k} \sum_{i \in \left[k\right]} \frac{x_i}{\mu \left( x \right)} \ln \left( \frac{x_i}{\mu \left( x \right)} \right)
        .
    \end{align*}
    We proceed by simplifying the above expression.
    Start by taking the first average term out of the summation:
    \begin{align*}
        F \left( \inequality \left( x \right), k \right)
        &
        =
        \frac{1}{k} \sum_{i \in \left[k\right]} \frac{x_i}{\mu \left( x \right)} \ln \left( \frac{x_i}{\mu \left( x \right)} \right)
        \nonumber\\&
        =
        \frac{1}{k \mu \left( x \right)} \sum_{i \in \left[k\right]} x_i \ln \left( \frac{x_i}{\mu \left( x \right)} \right)
        .
    \end{align*}
    Next, due to the basic logarithm property that $\log_a \left( \frac{b}{c} \right) = \log_a \left( b \right) - \log_a \left( c \right)$:
    \begin{align*}
        F \left( \inequality \left( x \right), k \right)
        &
        =
        \frac{1}{k \mu \left( x \right)} \sum_{i \in \left[k\right]} x_i \ln \left( \frac{x_i}{\mu \left( x \right)} \right)
        \nonumber\\&
        =
        \frac{1}{k \mu \left( x \right)}
        \sum_{i \in \left[k\right]} x_i \left(
        \ln \left( x_i \right)
        -
        \ln \left( \mu \left( x \right) \right)
        \right)
        .
    \end{align*}
    In total, we can write the following simplification:
    \begin{align}
        \label{eq:GEOneSimple}
        F \left( \inequality \left( x \right), k \right)
        &
        =
        \frac{1}{k \mu \left( x \right)}
        \sum_{i \in \left[k\right]} x_i \left(
        \ln \left( x_i \right)
        -
        \ln \left( \mu \left( x \right) \right)
        \right)
        .
    \end{align}
    For any $x = \left( 1, \dots, 1 \right)$ of length $k$, we have an average of $\mu \left( x \right) = 1$, so from \cref{eq:GEOneSimple}:
    \begin{align*}
        F \left( \inequality \left( x \right), k \right)
        &
        =
        \frac{1}{k \mu \left( x \right)}
        \sum_{i \in \left[k\right]} x_i \left(
        \ln \left( x_i \right)
        -
        \ln \left( \mu \left( x \right) \right)
        \right)
        \nonumber\\&
        =
        \frac{1}{k \cdot 1}
        \sum_{i \in \left[k\right]} \left( 1 \cdot \left(
        \ln \left( 1 \right)
        -
        \ln \left( 1 \right)
        \right)
        \right)
        \nonumber\\&
        =
        \frac{1}{k} \sum_{i \in \left[k\right]} \left( 0 - 0 \right)
        \nonumber\\&
        =
        \frac{1}{k} \left( 0 + \dots + 0 \right)
        \nonumber\\&
        =
        0
        .
    \end{align*}
    Let $e$ be Euler's number, i.e., $e \define \lim_{n \rightarrow \infty} \left( 1 + \frac{1}{n} \right)^n \approx 2.71828$.
    We follow by considering a wealth distribution equal to $y = \left( \frac{1}{2} e, \frac{3}{2} e \right)$.
    In particular:
    \begin{align*}
        \mu (y)
        &
        =
        \frac{1}{2} \left( \frac{1}{2} e + \frac{3}{2} e \right)
        =
        \frac{1}{2} \left( 2 e \right)
        =
        e
        .
    \end{align*}
    So, we obtain the following from \cref{eq:GEOneSimple}:
    \begin{align*}
        F \left( \inequality \left( y \right), 2 \right)
        &
        =
        \frac{1}{k \mu \left( y \right)}
        \sum_{i \in \left[k\right]} y_i \left(
        \ln \left( y_i \right)
        -
        \ln \left( \mu \left( y \right) \right)
        \right)
        \nonumber\\&
        =
        \frac{1}{2 \cdot e}
        \sum_{i \in \left[k\right]} y_i \left(
        \ln \left( y_i \right)
        -
        \ln \left( e \right)
        \right)
        \nonumber\\&
        =
        \frac{1}{2 e}
        \sum_{i \in \left[k\right]} y_i \left(
        \ln \left( y_i \right)
        -
        1
        \right)
        \nonumber\\&
        =
        \frac{1}{2 e}
        \left(
        \frac{1}{2} e \left( \ln \left( \frac{1}{2} e \right) - 1 \right)
        +
        \frac{3}{2} e \left( \ln \left( \frac{3}{2} e \right) - 1 \right)
        \right)
        \nonumber\\&
        =
        \frac{1}{4}
        \left(
        \ln \left( \frac{1}{2} e \right)
        -
        1
        +
        3 \left( \ln \left( \frac{3}{2} e \right) - 1 \right)
        \right)
        \nonumber\\&
        =
        \frac{1}{4}
        \left(
        \ln \left( \frac{1}{2} e \right)
        +
        3 \ln \left( \frac{3}{2} e \right)
        -
        4
        \right)
        \nonumber\\&
        =
        \frac{1}{4}
        \left(
        \ln \left( \frac{1}{2} e \right)
        +
        3 \ln \left( \frac{3}{2} e \right)
        \right)
        -1
        .
    \end{align*}
    By the basic properties of logarithms, we know that $\log_a \left( b \cdot c \right) = \log_a \left( b \right) + \log_a \left( c \right)$:
    \begin{align*}
        F \left( \inequality \left( y \right), 2 \right)
        &
        =
        \frac{1}{4}
        \left(
        \ln \left( \frac{1}{2} e \right)
        +
        3 \ln \left( \frac{3}{2} e \right)
        \right)
        -1
        \nonumber\\&
        =
        \frac{1}{4}
        \left(
        \ln \left( \frac{1}{2} \right)
        +
        \ln \left( e \right)
        +
        3 \left(
        \ln \left( \frac{3}{2} \right)
        +
        \ln \left( e \right)
        \right)
        \right)
        -1
        \nonumber\\&
        =
        \frac{1}{4}
        \left(
        \ln \left( \frac{1}{2} \right)
        +
        1
        +
        3 \left(
        \ln \left( \frac{3}{2} \right)
        +
        1
        \right)
        \right)
        -1
        \nonumber\\&
        =
        \frac{1}{4}
        \left(
        \ln \left( \frac{1}{2} \right)
        +
        3 \ln \left( \frac{3}{2} \right)
        +
        4
        \right)
        -1
        \nonumber\\&
        =
        \frac{1}{4}
        \left(
        \ln \left( \frac{1}{2} \right)
        +
        3 \ln \left( \frac{3}{2} \right)
        \right)
        +1-1
        \nonumber\\&
        =
        \frac{1}{4}
        \left(
        \ln \left( \frac{1}{2} \right)
        +
        3 \ln \left( \frac{3}{2} \right)
        \right)
        .
    \end{align*}
    Again, because of the basic logarithm property $\log_a \left( \frac{b}{c} \right) = \log_a \left( b \right) - \log_a \left( c \right)$:
    \begin{align*}
        F \left( \inequality \left( y \right), 2 \right)
        &
        =
        \frac{1}{4}
        \left(
        \ln \left( \frac{1}{2} \right)
        +
        3 \ln \left( \frac{3}{2} \right)
        \right)
        \nonumber\\&
        =
        \frac{1}{4}
        \left(
        \left( \ln(1) - \ln(2) \right)
        +
        3 \left( \ln(3) - \ln(2) \right)
        \right)
        \nonumber\\&
        =
        \frac{1}{4}
        \left(
        \left( 0 - \ln(2) \right)
        +
        3 \left( \ln(3) - \ln(2) \right)
        \right)
        \nonumber\\&
        =
        \frac{3}{4} \ln(3)
        - \ln(2)
        \nonumber\\&
        \approx
        0.13
        .
    \end{align*}
    And, we obtain:
    \begin{align*}
        F \left( \inequality \left( x \right), k \right)
        =
        0
        \ne 
        0.13
        \approx
        F \left( \inequality \left( y \right), 2 \right)
        .
    \end{align*}
    For a concrete instance of $x$, note that the above is true for all natural $k$, including $k = 2$, which results in $x = (1, 1).$
\end{proof}

\resGENotConstantCRest*
\begin{proof}
    Let $c \ne 0, 1$.
    Then, for such values of $c$, we have from \cref{res:GECharcterization} that:
    \begin{align*}
        F \left( \inequality \left( x \right), k \right)
        &
        =
        \frac{1}{k} \frac{1}{c(c-1)} \sum_{i \in \left[k\right]} \left( \left( \frac{x_i}{\mu \left( x \right)} \right)^c - 1 \right)
        .
    \end{align*}
    This can be simplified by first, moving the $-1$ term out of the summation:
    \begin{align*}
        F \left( \inequality \left( x \right), k \right)
        &
        =
        \frac{1}{k} \frac{1}{c(c-1)} \sum_{i \in \left[k\right]} \left( \left( \frac{x_i}{\mu \left( x \right)} \right)^c - 1 \right)
        \nonumber\\&
        =
        \frac{1}{k} \frac{1}{c(c-1)} 
        \left(
        \left( \sum_{i \in \left[k\right]} \left( \frac{x_i}{\mu \left( x \right)} \right)^c \right)
        - k
        \right)
        .
    \end{align*}
    Then, moving the $\frac{1}{k}$ inside the brackets:
    \begin{align*}
        F \left( \inequality \left( x \right), k \right)
        &
        =
        \frac{1}{k} \frac{1}{c(c-1)} 
        \left(
        \left( \sum_{i \in \left[k\right]} \left( \frac{x_i}{\mu \left( x \right)} \right)^c \right)
        - k
        \right)
        \nonumber\\&
        =
        \frac{1}{c(c-1)}
        \left(
        \left( \frac{1}{k} \sum_{i \in \left[k\right]} \left( \frac{x_i}{\mu \left( x \right)} \right)^c \right)
        - 1
        \right)
        .
    \end{align*}
    And, finally, noting that the summation can be simplified:
    \begin{align*}
        F \left( \inequality \left( x \right), k \right)
        &
        =
        \frac{1}{c(c-1)}
        \left(
        \left( \frac{1}{k} \sum_{i \in \left[k\right]} \left( \frac{x_i}{\mu \left( x \right)} \right)^c \right)
        - 1
        \right)
        \nonumber\\&
        =
        \frac{1}{c(c-1)} \left(
        \frac{
            \sum_{i \in \left[k\right]} \left( x_i \right)^c
        }{
            k \left( \mu \left( x \right) \right)^c
        }
        - 1
        \right)
        .
    \end{align*}
    So, we can write:
    \begin{align}
        \label{eq:GERestSimple}
        F \left( \inequality \left( x \right), k \right)
        &
        =
        \frac{1}{c(c-1)} \left(
        \frac{
            \sum_{i \in \left[k\right]} \left( x_i \right)^c
        }{
            k \left( \mu \left( x \right) \right)^c
        }
        - 1
        \right)
        .
    \end{align}
    Let $k$, and denote by $x$ the ``unit'' egalitarian wealth distribution of length $k$, i.e.:
    \begin{align*}
        x
        =
        \underbrace{\left( 1, \dots, 1 \right)}_{k \text{ times}}
        .
    \end{align*}
    Then, the distribution's average is $\mu\left(x\right) = 1$, and from \cref{eq:GERestSimple} we have that:
    \begin{align*}
        F \left( \inequality \left( x \right), k \right)
        &
        =
        \frac{1}{c(c-1)} \left(
        \frac{
            \sum_{i \in \left[k\right]} \left( x_i \right)^c
        }{
            k \left( \mu \left( x \right) \right)^c
        }
        - 1
        \right)
        \nonumber\\&
        =
        \frac{1}{c(c-1)} \left(
        \frac{
            \sum_{i \in \left[k\right]} \left( 1 \right)^c
        }{
            k \left( 1 \right)^c
        }
        - 1
        \right)
        \nonumber\\&
        =
        \frac{1}{c(c-1)} \left(
        \frac{
            k
        }{
            k
        }
        - 1
        \right)
        \nonumber\\&
        =
        \frac{1}{c(c-1)} \left(
        1
        - 1
        \right)
        \nonumber\\&
        =
        0
        .
    \end{align*}
    Next, consider $y = \left( 1, 3 \right)$, and substitute in \cref{eq:GERestSimple}:
    \begin{align*}
        F \left( \inequality \left( y \right), 2 \right)
        &
        =
        \frac{1}{c(c-1)} \left(
        \frac{
            \sum_{i \in \left[k\right]} \left( y_i \right)^c
        }{
            k \left( \mu \left( y \right) \right)^c
        }
        - 1
        \right)
        \nonumber\\&
        =
        \frac{1}{c(c-1)} \left(
        \frac{
            \sum_{i \in \left[2\right]} \left( y_i \right)^c
        }{
            2 \left( 2 \right)^c
        }
        - 1
        \right)
        \nonumber\\&
        =
        \frac{1}{c(c-1)} \left(
        \frac{
            \left( y_1 \right)^c + \left( y_2 \right)^c
        }{
            2^{c+1}
        }
        - 1
        \right)
        \nonumber\\&
        =
        \frac{1}{c(c-1)} \left(
        \frac{
            1^c + 3^c
        }{
            2^{c+1}
        }
        - 1
        \right)
        \nonumber\\&
        =
        \frac{1}{c(c-1)} \left(
        \frac{
            1 + 3^c
        }{
            2^{c+1}
        }
        - 1
        \right)
        .
    \end{align*}
    Note that we have $F \left( \inequality \left( y \right), 2 \right) = 0$ only when $c = 0, 1$, because:
    \begin{align*}
        F \left( \inequality \left( y \right), 2 \right)
        \Leftrightarrow
        \frac{
            1 + 3^c
        }{
            2^{c+1}
        }
        = 1
        \Leftrightarrow
        1 + 3^c = 2^{c+1}
        .
    \end{align*}
    Thus, we obtain that:
    \begin{align*}
        \forall c \ne 0, 1:
        F \left( \inequality \left( x \right), k \right)
        =
        0
        \ne 
        F \left( \inequality \left( y \right), 2 \right)
        .
    \end{align*}
    This completes the proof.
    For a specific example of a wealth distribution $x$, one can set $k=2$ and obtain $x = (1, 1).$
\end{proof}

\section{Additional Related Work}
\label[appendix]{sec:AdditionalRelatedWork}

\subsection*{Inequality in Traditional Settings}
The economic study of inequality measures is grounded in decades of research, including theoretical and empirical works such as \cite{gini1912variabilita, atkinson1970measurement, cowell1996robustness, blackorby1999income, giorgi1999income, atkinson2001promise, cowell2011measuring, yitzhaki2013gini, patty2019measuring, nevescosta2019not, connell2021inequality, auten2024income}.
While these works do not consider Sybil identities, some focus on somewhat related issues, such as measurement errors.

Among these, \cite{atkinson2001promise} highlights that comparing the inequality of different economies may be sensitive to methodological data collection gaps, e.g., some countries may report inequality calculated using pre-tax income, while others may use disposable income.
Furthermore, \cite{cowell1996robustness} analyzes the robustness of inequality estimation given samples drawn from some probability distribution, where, with an infinitesimal probability, a sample may have an arbitrarily large error (e.g., due to a typing error).
Then, the empirical influence function \cite{hampel1974influence} is used to show that measures satisfying the transfer principle (\cref{def:TransferPrinciple}) and decomposability (\cref{def:Decomposability}) may be biased by errors.
While we considerably differ, it is worthwhile to dwell on the complementarity of our works:
1. We consider a fluid population size that may change when actors create multiple identities.
2. Our impossibilities hold even when there is no error in the magnitude of reported wealth.
3. We also derive results for measures that only satisfy the weak transfer principle.
The focus of \cite{korinek2006survey} is the empirical estimation of the probability that households with different income would respond to income surveys, and how this may affect measured inequality.
The survey of \cite{lustig2020missing} presents various measurement pitfalls such as incomplete coverage and under-reporting of wealth and income.

\subsection*{Inequality in Digital Settings}
The literature on inequality in digital platforms, where pseudonymity is common and thus the threat of Sybil identities and false names is significant \cite{conitzer2010using}, includes works on social networks \cite{nielsen2013do, vachuska2025digital, machado2025super}, open-source communities \cite{chelkowski2016inequalities, petryk2023data, hoffmann2024value}, file-sharing communities \cite{hales2009bittorrent, rahman2010improving}, online healthcare platforms \cite{hsu2022understanding}, distributed consensus protocols \cite{ovezik2025sok, alzayat2021modeling, kondor2014do, lin2021measuring, gupta2018gini, gochhayat2020measuring, sai2021characterizing}, payment channels \cite{zabka2022short}, and \gls{DeFi} platforms \cite{nadler2022decentralized, fritsch2024analyzing, simons2021grants, barbereau2022defi, savelyev2022how, barbereau2022decentralised, feichtinger2024hidden}.
Of these, \cite{fritsch2024analyzing, ovezik2025sok, he2025six} attempt to empirically identify Sybil identities using heuristic methods, which some note as hard to validate owing to a lack of ground-truth \cite{victor2020address}.
It is worthwhile to stress that actors may mask the connections (if any) between the identities they create by using sophisticated cryptographic protocols.
Notably, the formal unlinkability guarantees of such protocols led to government-issued sanctions against them, their operators, and even their users \cite{wahrstatter2024blockchain, yaish2024speculative}.
Some have highlighted the governance structure of blockchain platforms as particularly susceptible to concentration, which can result in risk to user funds and to platform stability and security \cite{dotan2023vulnerable, messias2023understanding, yaish2024strategic, austgen2023dao, feichtinger2024sok, kitzler2025governance, yang2025decentralization, celig2025distributional}.

\subsection*{Sybils and Collusion}
The impact of false names and Sybils has been studied in many economic settings, e.g., auctions \cite{yokoo2001robust, sakurai2003false, yokoo2004effect, yokoo2010false, iwasaki2010worst, alkalayhoulihan2014false, gafni2020vcg, gafni2023optimal, pan2024sibyl}, voting \cite{wagman2008optimal, waggoner2012evaluating, gafni2021worst}, facility location \cite{sonoda2016false}, matching \cite{todo2013false}, social networks \cite{friedman2001social, jia2023incentivising}, querying mechanisms \cite{chen2018sybil, zhang2021sybil, zhang2023collusion}, data sharing for machine learning \cite{gafni2022long}, peer-to-peer file-sharing \cite{zohar2009adding, cheng2022study, cheng2024tight}, decentralized exchanges \cite{zhang2024rediswap}, and promotional reward schemes, including both multi-level marketing for ``traditional'' settings \cite{emek2011mechanisms, drucker2012simpler, lv2013fair}, and blockchain airdrops \cite{messias2023airdrops, yaish2024tierdrop, yaish2025platform}.
This is joined by the literature on collusion \cite{samuelson1958exact, stigler1964theory, marshall2007bidder, micali2007collusion, yokoo2004effect, marshall2007bidder, wagman2008optimal, bachrach2010honor, drucker2012simpler, feige2013cascade, zhang2021sybil, zhang2023collusion, branzei2023learning}.

A line of work highlights Sybil-proofness as a core property of mechanisms operating in pseudonymous settings.
This includes, for example, the literature on permissionless consensus protocols \cite{eyal2014majority, babaioff2012bitcoin, chen2019axiomatic, budish2024economic}, i.e., protocols which facilitate agreement between a network of distrusting nodes.
Moreover, Sybil-resistance is a part of the canonical desiderata for transaction fee mechanisms used by decentralized payment systems, which determine which transactions system nodes should process, the fees transactions pay, and the revenue nodes can collect \cite{lavi2019redesigning, yao2018incentive, gafni2022greedy, chung2023foundations, shi2023what, roughgarden2024transaction, ferreira2024incentive, gafni2024barriers, chung2024collusion, gafni2024discrete, bahrani2024transaction, yaish2024incentives, ganesh2024revisiting, chen2025bayesian, wang2025proophi, garimidi2025transaction, gafni2025transaction}.

\section{Reproduction}
\label[appendix]{sec:Reproduction}
We ensure the full reproducibility of our paper's contents by providing all the code and data used to produce them, together with complete reproduction instructions.
Moreover, we augment the code and data by also providing the paper's figures, and, for those generated by code, also the code to re-create them.

\subsection{Step I: Download}
The complete material needed to reproduce our work can be obtained from:
\begin{quote}
    \texttt{\url{https://drive.proton.me/urls/AXKHYN7KKM\#5LikpW13ajpH}}
\end{quote}

\subsection{Step II: Unpack}
The material is given as a \texttt{zip} archive.
By unpacking it, the following folders are extracted:
\begin{enumerate}
    \item \texttt{code}, that includes all the code we use in this work.
    \item \texttt{data}, that contains all the data we use in this work.
    \item \texttt{figures}, that contains all the figures we use in this work.
    We note that, by executing our code, the specific figures created by the supplied code will be re-produced.
    So, this will result in overwriting the figures included in the archive.
    The included figures correspond to the very same versions that are used in this paper.
    \item \texttt{requirements}, that has files detailing the names and versions of all packages needed to execute our code.
\end{enumerate}

\subsection{Step III: Install}
Install the packages required to run our code by using one of the following files:
\begin{enumerate}
    \item \texttt{requirements/requirements.txt}, a succinct list of the packages used in our code.
    These packages can be installed using \texttt{Python v3.13.5} and \texttt{pip v25.1} by running:
    \\\texttt{pip install -r requirements.txt}

    \item \texttt{requirements/pip\_env.txt}, a detailed list of all python packages installed on our computer when we ran our code, including packages unused by our code.
    The packages contained in this list can be installed in a similar manner for \texttt{requirements.txt}.

    \item \texttt{requirements/conda\_env.yml}, a specification of the complete Anaconda environment that we used to run our code, including packages unused in our code.
    This environment can be installed using \texttt{Anaconda v2025.06-0} by running:
    \\\texttt{conda env create -f /path/to/conda\_env.yml}
\end{enumerate}

\subsection{Step IV: Run}
To run our code, execute our main code file:
\begin{quote}
    \texttt{code/code.ipynb}
\end{quote}
We note that we use a Jupyter notebook so that all outputs created by our code and used in this submission can be viewed by simply opening the file, without running it.

\section{Glossary}
\label[appendix]{sec:Glossary}
We provide a summary of all the notations and acronyms used throughout this work.
\setglossarystyle{alttree}\glssetwidest{AAAA}
\printnoidxglossary[type={symbols}]
\printnoidxglossary[type={acronym}]
\end{document}